\newtheorem{theorem}{Theorem}
\newtheorem{corollary}{Corollary}
\newtheorem{lemma}{Lemma}
\newtheorem{proposition}{Proposition}
\newenvironment{proof}[1][Proof.]{\begin{trivlist}
\item[\hskip \labelsep {\bfseries #1}]}{\end{trivlist}}
\newcommand{\AmS}{{\protect\the\textfont2
  A\kern-.1667em\lower.5ex\hbox{M}\kern-.125emS}}
\title{On strongly spanning $k$-edge-colorable subgraphs}
\author{Vahan V. Mkrtchyan\address[MCSD]{Department of Informatics and Applied Mathematics,\\
Yerevan State University, Yerevan, 0025, Armenia}%
\address{Institute for Informatics and Automation Problems,\\
National Academy of Sciences of Republic of Armenia, 0014, Armenia}
\thanks{email: vahanmkrtchyan2002@\{ysu.am, ipia.sci.am,
yahoo.com\}},        
                and
        Gagik N. Vardanyan\addressmark[MCSD]\thanks{email: vgagik@gmail.com.}}
\begin{document}

\maketitle

\begin{abstract}
A subgraph $H$ of a multigraph $G$ is called strongly spanning, if any vertex of $G$ is not isolated in $H$, while it is called maximum $k$-edge-colorable, if $H$ is proper $k$-edge-colorable and has the largest size. We introduce a graph-parameter $sp(G)$, that coincides with the smallest $k$ that a graph $G$ has a strongly spanning maximum $k$-edge-colorable subgraph. Our first result offers some alternative definitions of $sp(G)$. Next, we show that $\Delta(G)$ is an upper bound for $sp(G)$, and then we characterize the class of graphs $G$ that satisfy $sp(G)=\Delta(G)$. Finally, we prove some bounds for $sp(G)$ that involve well-known graph-theoretic parameters. 
\end{abstract} \\ 

Keywords: $k$-edge-colorable subgraph; maximum $k$-edge-colorable subgraph; strongly spanning $k$-edge-colorable subgraph; $[1,k]$-factor;\\

2010 Mathematics Subject Classification codes: Primary: 05C70; Secondary 05C15

\section{Introduction}

Let $N$ denote the set of positive integers. In this paper we consider multigraphs. They are assumed to be finite, undirected and without loops, though they may contain multiple edges. If $G$ is a multigraph, then for a vertex $x\in V(G)$ $d_G(x)$ denotes the degree of $x$ in $G$. Moreover, let $\Delta (G)$ and $\delta (G)$ denote the maximum and minimum degrees of vertices in $G$, respectively. A vertex is defined to be isolated in $G$, if its degree is zero. If $G^{\prime }$ is a subgraph of $G$, then we say that $G^{\prime }$ covers (misses) a vertex $x$ of $G$, if $d_{G^{\prime }}(x)\geq 1$ ($d_{G^{\prime }}(x)=0$). A subgraph is strongly spanning, if it covers all the vertices of the graph. A point that should be made clear here, is that if a vertex $x$ of $G$ is not a vertex of a subgraph $G^{\prime }$, then we assume that $d_{G^{\prime }}(x)=0$.

The length of a path $P$ of a multigraph $G$ is the number of edges lying on $P$. If $a,b$ are non-negative integers, then a subgraph $H$ of a multigraph $G$ with $V(H)=V(G)$ is called an $[a,b]$-factor of $G$ if for any vertex $v$ of $G$ $a\leq d_H(v) \leq b$. A subset $E'$ of edges of a multigraph $G$ is called matching, if $(V(G),E')$ is a $[0,1]$-factor of $G$. Clearly, matchings can be defined as a set of edges that contain no adjacent edges. Usually, a vertex that is (not) incident to an edge from a matching, is said to be covered (missed) by the matching. A matching is maximum, if it has the largest cardinality. A matching is perfect, if any vertex is incident to an edge from the matching.

A proper $k$-edge-coloring of a multigraph $G$ is an assignment of colors from a set of $k$ colors such that adjacent edges receive different colors. Observe that a proper $k$-edge-coloring of a multigraph $G$ can be viewed a partition of $E(G)$ into $k$ matchings. Usually, these matchings into which $E(G)$ is partitioned, are called color-classes of the edge-coloring.  The least integer $k$ for which $G$ has a proper $k$-edge-coloring is called the chromatic index of $G$ and is denoted by $\chi^{\prime}(G)$. Clearly, $\chi^{\prime}(G)\geq \Delta(G)$ for any multigraph $G$, and the following classical theorems of Shannon and Vizing give non-trivial upper bounds for $\chi^{\prime}(G)$:

\begin{theorem}
(Shannon \cite{Shannon}). For every multigraph $G$%
\begin{equation}
\Delta(G)\leq \chi ^{\prime }(G)\leq \left[ \frac{3\Delta(G)}{2}\right].
\end{equation}
\end{theorem}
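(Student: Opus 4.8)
The plan is to prove the two inequalities separately; only the upper bound requires work. For the lower bound, the $\Delta(G)$ edges incident to a vertex of maximum degree are pairwise adjacent, hence receive pairwise distinct colours in any edge-colouring, so $\chi'(G)\ge\Delta(G)$. For the upper bound, put $\Delta=\Delta(G)$ and $k=\lfloor 3\Delta/2\rfloor$, and prove by induction on $|E(G)|$ the (formally stronger) statement that every graph of maximum degree at most $\Delta$ has a $k$-edge-colouring. The case $E(G)=\emptyset$ is trivial, so fix an edge $e=uv$ and, by the induction hypothesis, a $k$-edge-colouring $\varphi$ of $G-e$; it remains to recolour $\varphi$ so that some colour becomes available for $e$.

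Call a colour \emph{free} at a vertex $x$ if no edge of $G-e$ incident to $x$ carries it. Since the edges at $u$ have distinct colours and $d_{G-e}(u)=d_G(u)-1\le\Delta-1$, the number of colours free at $u$ is at least $k-(\Delta-1)=\lfloor\Delta/2\rfloor+1$, and likewise at $v$. If some colour is free at both $u$ and $v$, assign it to $e$ and we are done. So assume the set $F_u$ of colours free at $u$ and the set $F_v$ of colours free at $v$ are disjoint; each still has at least $\lfloor\Delta/2\rfloor+1$ members, and every colour of $F_u$ occurs at $v$ while every colour of $F_v$ occurs at $u$.

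Next I would attempt a Kempe-chain exchange. Pick $\alpha\in F_u$ and $\beta\in F_v$, and let $H$ be the subgraph of $G-e$ consisting of the edges coloured $\alpha$ or $\beta$. Every vertex has degree at most $2$ in $H$, so $H$ is a disjoint union of paths and even cycles; moreover $u$ has degree exactly $1$ in $H$ (its unique $\beta$-edge), as does $v$, so each of $u,v$ is an endpoint of a path of $H$. If the path $P$ of $H$ through $u$ does not reach $v$, then interchanging the colours $\alpha$ and $\beta$ along $P$ yields a $k$-edge-colouring of $G-e$ in which $\beta$ is free at $u$ and still free at $v$; colouring $e$ with $\beta$ finishes this case.

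The one remaining possibility — that for the chosen $\alpha,\beta$ (and, once the argument is pushed, for \emph{every} choice of $\alpha\in F_u$, $\beta\in F_v$) the $\alpha\beta$-path of $H$ joins $u$ to $v$ — is the heart of the matter, and it is exactly here that the Vizing-type bound $\Delta+1$ must be relaxed to $\lfloor 3\Delta/2\rfloor$. The plan is to exploit that $F_u$ and $F_v$ each contain at least $\lfloor\Delta/2\rfloor+1$ colours: fixing $\beta$ and letting $\alpha$ range over $F_u$, all the resulting $u$--$v$ paths leave $u$ along the same $\beta$-coloured edge $uw$ and then continue through pairwise distinct edges at $w$ — one of each colour of $F_u$ — which forces $w$ to have large degree; combining this with the symmetric statement at the last internal vertex of these paths at $v$, and then reorganising the whole configuration as a Kempe exchange carried out relative to a suitably chosen fan of edges at $u$ (Shannon's original argument does the corresponding bookkeeping directly), one produces an admissible recolouring of $G-e$ and inserts $e$. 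I expect this fan/recolouring step, together with the direct verification of the few small values of $\Delta$, to be the only genuinely delicate part of the proof; everything preceding it is routine.
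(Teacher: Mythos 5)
First, note that the paper does not prove this statement at all: it is quoted as a classical theorem with a citation to Shannon, so there is no proof of record to compare your argument against; it must stand on its own. As a standalone proof it has a genuine gap. Your lower bound, the induction setup, the counting $|F_u|,|F_v|\ge k-(\Delta-1)=\lfloor\Delta/2\rfloor+1$, and the easy Kempe-chain case (the $\alpha\beta$-path from $u$ not ending at $v$) are all fine, but everything up to that point also proves the trivial bound $\chi'(G)\le 2\Delta-1$ (one free colour at each endpoint always exists there). The entire content of Shannon's theorem lies in the one case you leave open, namely when for every $\alpha\in F_u$ and $\beta\in F_v$ the $\alpha\beta$-chain from $u$ ends at $v$, and there you only state a plan (``one produces an admissible recolouring \dots I expect this fan/recolouring step \dots to be the only genuinely delicate part'') without carrying it out.

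Moreover, the one concrete observation you do extract in that case is not by itself near a contradiction: if $w$ is the other end of the $\beta$-edge at $u$, the chains force $w$ to carry an $\alpha$-edge for each $\alpha\in F_u$, giving only $d(w)\ge\lfloor\Delta/2\rfloor+2$, which is compatible with $d(w)\le\Delta$ for every $\Delta\ge 3$. The standard way to close this case is a shift/rotation argument (uncolour $uw$, colour $uv$ with $\beta$, and then compare the free colours at $u$, $v$ and $w$; the inequality $3\bigl(k-\Delta+1\bigr)>k$, valid precisely when $k\ge\lfloor 3\Delta/2\rfloor$, is what forces two of these vertices to share a free colour or allows a further exchange). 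None of this bookkeeping appears in your write-up, so the proposal, as it stands, is an outline of a correct strategy rather than a proof.
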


\begin{theorem}
(Vizing, \cite{Vizing}). For every multigraph $G$ 
\begin{equation*}
\Delta(G)\leq \chi ^{\prime }(G)\leq \Delta(G) +\mu(G),
\end{equation*} where $\mu(G)$ denotes the maximum multiplicity of an edge in $G$.
\end{theorem}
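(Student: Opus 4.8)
The lower bound $\Delta(G)\le\chi^{\prime}(G)$ is immediate, since the $\Delta(G)$ edges incident to a vertex of maximum degree are pairwise adjacent and so must receive pairwise distinct colors. For the upper bound I would induct on $|E(G)|$. Put $k:=\Delta(G)+\mu(G)$; the case $E(G)=\emptyset$ is trivial, so assume $|E(G)|\ge 1$ and delete an edge $e=xy_{0}$. Since $\Delta(G-e)\le\Delta(G)$ and $\mu(G-e)\le\mu(G)$, the induction hypothesis yields a proper $k$-edge-coloring $c$ of $G-e$, and the task reduces to modifying $c$ so that $e$ too can be colored.

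The mechanism is Vizing's fan-recoloring argument, fed by a single counting remark. For a vertex $v$ write $\overline{c}(v)$ for the set of colors of $\{1,\dots,k\}$ missing at $v$ under $c$; since at most $d_{G-e}(v)$ colors occur at $v$, we have $|\overline{c}(v)|\ge k-d_{G-e}(v)\ge\mu(G)$ for every $v$, and in fact $|\overline{c}(x)|,|\overline{c}(y_{0})|\ge\mu(G)+1$ because $x$ and $y_{0}$ each lose the edge $e$. I would then grow a \emph{maximal fan} at $x$: a sequence $y_{0},y_{1},\dots,y_{s}$ of neighbors of $x$ (not necessarily distinct as vertices), joined to $x$ by pairwise distinct edges $xy_{0}(=e),xy_{1},\dots,xy_{s}$ with $c(xy_{i})\in\overline{c}(y_{i-1})$ for $1\le i\le s$, chosen so that no further edge can be appended.

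The proof then splits according to whether $\overline{c}(x)\cap\overline{c}(y_{s})\neq\emptyset$. If some $\alpha$ lies in $\overline{c}(x)\cap\overline{c}(y_{s})$, one rotates the whole fan --- reset $c(xy_{i-1}):=c(xy_{i})$ for $i=1,\dots,s$ and then $c(xy_{s}):=\alpha$ --- and a short check (using $\alpha\in\overline{c}(x)$ together with the fan conditions $c(xy_{i})\in\overline{c}(y_{i-1})$) confirms that the result is a proper $k$-edge-coloring of all of $G$, closing the induction. Otherwise pick $\alpha\in\overline{c}(x)$ and $\beta\in\overline{c}(y_{s})$, necessarily distinct, take the maximal path $P$ through $y_{s}$ whose edges alternate between the colors $\alpha$ and $\beta$, and interchange $\alpha$ and $\beta$ along $P$; after this Kempe exchange one can, at an appropriately truncated initial segment of the fan, force the previous case and finish, or else contradict maximality of the fan. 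I expect the genuinely delicate point --- and precisely the reason the additive term is $\mu(G)$ rather than the $1$ of the simple-graph theorem --- to be the bookkeeping in this last step: in a multigraph the fan vertices $y_{1},\dots,y_{s}$ may coincide, and several parallel edges at $x$ may carry colors missing at one and the same fan vertex, so $\alpha$, $\beta$, the truncation point, and the alternating path must all be chosen so that the recoloring collides with no other fan edge; the inequalities $|\overline{c}(v)|\ge\mu(G)$ and $|\overline{c}(x)|,|\overline{c}(y_{0})|\ge\mu(G)+1$ are exactly what make such choices available. Everything else is routine verification.
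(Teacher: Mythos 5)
The paper never proves this statement: it is quoted as a classical theorem with a citation to Vizing's 1965 paper and is used purely as background for the chromatic index, so there is no in-paper argument to measure your proposal against — it has to stand on its own. Your route (induction on $|E(G)|$; the counts $|\overline{c}(v)|\ge k-d_{G-e}(v)\ge\mu(G)$ for all $v$ and $|\overline{c}(x)|,|\overline{c}(y_0)|\ge\mu(G)+1$; a maximal fan at $x$; rotation when $\overline{c}(x)\cap\overline{c}(y_s)\neq\emptyset$; otherwise an $\alpha$/$\beta$ Kempe exchange) is indeed the standard proof of $\chi'(G)\le\Delta(G)+\mu(G)$, and that strategy does work.

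However, as written the argument stops exactly at the decisive step. The sentence ``after this Kempe exchange one can, at an appropriately truncated initial segment of the fan, force the previous case and finish, or else contradict maximality of the fan'' is an assertion, not a proof: you do not say at which fan index you truncate, why the truncated prefix is still a fan after the swap (the $\alpha$-$\beta$ path $P$ may pass through fan vertices and alter their sets of missing colors), why the chosen color is now missing both at $x$ and at the truncation endpoint, or how maximality is contradicted when $P$ avoids $x$. In the multigraph setting you also need a convention handling repeated colors and coinciding fan vertices — e.g.\ use maximality to argue that every $\beta\in\overline{c}(y_s)$ not missing at $x$ occurs only on fan edges, take the smallest index $j$ with $c(xy_j)=\beta$, and split into the cases where the maximal $\alpha$-$\beta$ path from $y_s$ ends at $x$, ends at the fan vertex $y_{j-1}$ (or at $y_s$), or ends elsewhere, checking the recoloring separately in each case. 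You yourself identify this bookkeeping as ``the genuinely delicate point'' and then defer it as routine; but it is the core of the theorem (it is precisely where the term $\mu(G)$ is earned), so the proposal as it stands is an outline of the correct classical argument with its central verification missing. The counting inequalities you establish are correct and necessary, but they do not by themselves deliver the recoloring.
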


Note that Shannon's theorem implies that if we consider a cubic
multigraph $G$, then $3\leq \chi ^{\prime }(G)\leq 4$, thus $\chi ^{\prime
}(G)$ can take
only two values. In 1981 Holyer proved that the problem of deciding whether $%
\chi ^{\prime }(G)=3$ or not for cubic multigraphs $G$ is NP-complete \cite%
{Holyer}, thus the calculation of $\chi ^{\prime }(G)$ is already hard
for cubic multigraphs.

For a multigraph $G$ and $k\in N$, let 
\begin{equation*}
\nu_k(G)=\{|E(H_k)|:H_k\textrm{ is a proper }k\textrm{-edge-colorable subgraph of }G\}.
\end{equation*} A proper $k$-edge-colorable subgraph of $G$ containing $\nu_k(G)$ edges will be called a maximum $k$-edge-colorable subgraph. We define $\nu(G)=\nu_1(G)$.

The quantitative aspect of the investigation of maximum $k$-edge-colorable subgraphs
of multigraphs and particularly, $r$-regular multigraphs has attracted a lot of attention, previously. The basic problem that researchers were interested was the following: what is
the proportion of edges of a multigraph (or an $r$-regular multigraph, and particularly, cubic
multigraph), that we can cover by its $k$ matchings? 

For the case $k=1$ in \cite{Hobbs} an investigation is carried out in the class of cubic graphs, and in
\cite{Bella,HenningYeo,TakaoBaybars,Takao,Weinstein} for the
general case. Let us also note that the relation between $\nu _{1}(G)$ and $%
\left\vert V\right\vert $ has also been investigated in the regular
multigraphs of high girth \cite{GirthBound}.

The same is true for the case $k=2,3$. Albertson and Haas
investigate these ratios in the class of cubic and $4$-regular graphs in \cite
{AlbertsonHaasFirst,AlbertsonHaasSecond}, and Steffen investigates
the problem in the class of bridgeless cubic multigraphs in \cite{Steffen}. Similar investigations are done in \cite{Rizzi_2009} for subcubic multigraphs. In \cite{part1} the problem is addressed in the class of cubic multigraphs. Finally, a best-possible bound is proved in \cite{DeltaSubgraphs} for the case $k=\Delta(G)$ in the class of all multigraphs.

However, it worths to be mentioned that the quantitative line of the research was not the only one. Previously, a special attention was also paid to structural properties of maximum $k$-edge-colorable subgraphs, and sometimes this kind of results have helped researchers to get quantitative results. A typical example of a structural result is the one proved in \cite{AlbertsonHaasSecond}, which states that in any cubic multigraph $G$ there is a maximum $2$-edge-colorable subgraph $H$, such that the multigraph $G\backslash E(H)$ is $2$-edge-colorable. Recently, in \cite{DeltaSubgraphs} new such results are presented for maximum $\Delta(G)$-edge-colorable subgraphs of multigraphs $G$. In particular, it is shown there that any
set of vertex-disjoint cycles of a multigraph $G$ (particularly, any $2$-factor) can be extended to a maximum $\Delta(G)$-edge-colorable subgraph of $G$ if $\Delta(G)\geq 3$. Also, it is shown there that for any maximum $\Delta(G)$-edge-colorable
subgraph $H$ of $G$ $|\partial_{H}(X)|\geq\lceil\frac{|\partial_{G}(X)|}{2}\rceil$ for each $X\subseteq V(G)$, where $\partial_K(X)$ is the set of edges of a multigraph $K$ with exactly one end-vertex in $X$. Finally, in \cite{part2} it is shown that the edges of a cubic multigraph lying outside a maximum $3$-edge-colorable subgraph form a matching. Though this result does not have a direct generalization, using the ideas of the proof of Vizing theorem for graphs from \cite{West}, in \cite{DeltaSubgraphs} it is shown that a graph $G$ has a maximum $\Delta(G)$-edge-colorable subgraph $H$, such that the edges of $G$ that do not belong to $H$ form a matching.

In this paper, we concentrate on strongly spanning maximum $k$-edge-colorable subgraphs of multigraphs. In the beginning of the paper we introduce a graph-parameter $sp(G)$, that coincides with the smallest $k$ that a graph $G$ has a strongly spanning maximum $k$-edge-colorable subgraph. We first give some alternative definitions of $sp(G)$. Then, we show that $\Delta(G)$ is an upper bound for $sp(G)$, and we proceed with the characterization of graphs $G$ with $sp(G)=\Delta(G)$. Finally, we relate $sp(G)$ to some well-known graph-theoretic parameters.

Non-defined terms and concepts can be found in \cite{Lov,West}.

\section{The main results}

We start with a lemma, that will allow us to look at our main parameter from various perspectives.

\begin{lemma}\label{ContinuousSpectra} If a multigraph $G$ has a strongly spanning $k$-edge-colorable subgraph, then it has a strongly spanning maximum $k$-edge-colorable subgraph.
\end{lemma}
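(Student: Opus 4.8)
The plan is to use an extremal choice of a maximum $k$-edge-colorable subgraph, biased towards a fixed spanning one. So, fix a spanning $k$-edge-colorable subgraph $H$ of $G$; since every color-class of $H$ is a matching, $d_H(x)\le k$ for each vertex $x$. Among all maximum $k$-edge-colorable subgraphs of $G$, let $M$ be one for which $|E(M)\cap E(H)|$ is as large as possible, and fix a $k$-edge-coloring $\phi$ of $M$. I claim this $M$ is already spanning, which is exactly what we want.

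Suppose not, and let $v$ be a vertex missed by $M$. Since $H$ is spanning, $v$ is incident to some edge $e=uv\in E(H)$, and $e\notin E(M)$ because $d_M(v)=0$. The first step is to show $d_M(u)=k$. Indeed, if $d_M(u)<k$ then some color is unused at $u$, and all colors are unused at $v$, so $\phi$ extends to a proper $k$-edge-coloring of $M+e$; this is a $k$-edge-colorable subgraph with one more edge than $M$, contradicting the maximality of $M$. Since $M$ is $k$-edge-colorable we also have $d_M(u)\le k$, hence $d_M(u)=k$.

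Now comes the main step. For each of the $k$ edges $f$ of $M$ incident to $u$, consider $M'=(M-f)+e$. Keeping $\phi$ on $E(M)\setminus\{f\}$ and assigning color $\phi(f)$ to $e$ yields a proper $k$-edge-coloring of $M'$ (at $u$ the color $\phi(f)$ is freed by deleting $f$ and re-used by $e$; at $v$ there is no conflict since $v$ was missed; other vertices only lose an edge). Thus $M'$ is $k$-edge-colorable with $|E(M')|=|E(M)|$, i.e. again maximum. Since $e\in E(H)\setminus E(M)$ and $f\in E(M)$, one computes $|E(M')\cap E(H)|=|E(M)\cap E(H)|-[f\in E(H)]+1$, so the maximality of $|E(M)\cap E(H)|$ forces $f\in E(H)$. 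Hence all $k$ edges of $M$ at $u$ lie in $E(H)$, and together with the distinct edge $e\in E(H)\setminus E(M)$ this gives $d_H(u)\ge k+1$, contradicting $d_H(u)\le k$. Therefore $M$ misses no vertex, and $M$ is the desired spanning maximum $k$-edge-colorable subgraph.

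The only delicate point is the swap: one has to verify that moving the color of the deleted edge $f$ onto the newly added edge $e$ keeps the coloring proper, and — more importantly — one has to pick the right potential. A bare "maximum $k$-edge-colorable subgraph" is not enough, since nothing then forces $f\in E(H)$; comparing $|E(\cdot)\cap E(H)|$ against the fixed spanning $H$ is what makes the contradiction close. (An alternative, slightly heavier route is to prove directly that a maximum $k$-edge-colorable subgraph missing a vertex can always be rewritten as a maximum one covering that vertex and everything it already covered, and then iterate over the missed vertices; but the extremal formulation above avoids that induction.)
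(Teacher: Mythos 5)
Your proof is correct, and while it shares the paper's extremal-swap philosophy, it closes the argument by a genuinely different route. The paper uses a two-level extremal choice: among maximum $k$-edge-colorable subgraphs it first takes those covering as many vertices as possible, and only then maximizes the intersection with the fixed spanning subgraph $A_k$; it must then prove auxiliary structural claims — every neighbour $u_i$ of the missed vertex satisfies $d_{H_k}(u_i)=k$, and every covered neighbour of such a $u_i$ has $H_k$-degree exactly $1$ — so that the final swap $(H_k\setminus\{f\})\cup\{e\}$ (with $f\notin E(A_k)$) stays inside the coverage-maximal family, and the contradiction is that the intersection with $A_k$ strictly increases. You instead work with the single potential $|E(M)\cap E(H)|$, prove only $d_M(u)=k$, and then run the colour-preserving swap over \emph{every} edge $f$ of $M$ at $u$, deducing from extremality that each such $f$ must lie in $E(H)$; the contradiction is then the degree count $d_H(u)\ge k+1 > k$, which is valid also with multiple edges since $e\notin E(M)$ is distinct from the $k$ edges of $M$ at $u$. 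Your version is shorter and avoids the vertex-coverage criterion and the paper's claims (a) and (c) entirely; the paper's version, in exchange, yields a bit of extra structural information about coverage-maximal optima, but for the lemma itself your argument is complete: the swap is properly recoloured, $M'$ remains maximum because $|E(M')|=|E(M)|$, and the endpoint of $f$ other than $u$ cannot be $v$ since $v$ is missed by $M$.
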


\begin{proof} Let $A_k$ be a strongly spanning $k$-edge-colorable subgraph. Consider all maximum $k$-edge-colorable subgraphs of $G$, and among them choose the ones that cover maximum possible number of vertices. From these subgraphs, choose a subgraph $H_k$ such that $|E(A_k)\cap E(H_k)|$ is maximized. Let us show that $H_k$ is a strongly spanning subgraph.

On the opposite assumption, consider a vertex $u$ missed by $H_k$. Consider the vertices $u_1,...,u_q$ ($q\geq 1$) that are adjacent to $u$. Since $H_k$ is a maximum $k$-edge-colorable subgraph of $G$, we have:

\begin{enumerate}
	\item [(a)] $d_G(u_i)\geq k+1$ for $i=1,...,q$;
	\item [(b)] $d_{H_k}(u_i)= k$ for $i=1,...,q$.
\end{enumerate} Let $v_i$ be any neighbour of the vertex $u_i$ ($1\leq i \leq q$) with $d_{H_k}(v_i)\geq 1$. Note that (a) implies that such a vertex $v_i$ exists, moreover, it  is different from $u$. Let us show that
\begin{enumerate}
	\item [(c)] $d_{H_k}(v_i)=1$.
\end{enumerate} Now if $d_{H_k}(v_i)\geq 2$, then define a subgraph $H^{\prime}_k$ of $G$ as follows:
\begin{equation*}
H^{\prime}_k=(H_k\backslash \{(u_i,v_i)\})\cup \{(u,u_i)\}.
\end{equation*}Clearly $H^{\prime}_k$ is a maximum $k$-edge-colorable subgraph of $G$. Moreover, $H^{\prime}_k$ covers more vertices of $G$ than $H_k$ does, which contradicts the choice of $H_k$. Thus (c) must hold.

We are ready to complete the proof of the lemma. Since $A_k$ is a strongly spanning $k$-edge-colorable subgraph, there is an edge $e=(u,w)\in E(A_k)$. By (b), we have $d_{H_k}(w)= k$, thus there is an edge $f=(w,z)\in E(H_k)$ such that $f\notin E(A_k)$. Consider a subgraph $H^{\prime \prime}_k$ of $G$ as follows:
\begin{equation*}
H^{\prime \prime}_k=(H_k\backslash \{f\})\cup \{e\}.
\end{equation*}Clearly $H^{\prime \prime}_k$ is a maximum $k$-edge-colorable subgraph of $G$. Due to (c), $H^{\prime \prime}_k$ covers maximum possible number of vertices, like $H_k$. However,
\begin{equation*}
|E(A_k)\cap E(H^{\prime \prime}_k)|>|E(A_k)\cap E(H_k)|,
\end{equation*} which contradicts the choice of $H_k$. The proof of Lemma \ref{ContinuousSpectra} is completed. $\square$

\end{proof}
 
Next, we prove the following theorem.

\begin{theorem}\label{characthm} For $k\in N$ and a multigraph $G$ without isolated vertices, the following conditions are equivalent:
\begin{enumerate}
\item [(a)] $G$ contains a $[1,k]$-factor,
\item [(b)] $G$ contains a strongly spanning $k$-edge-colorable subgraph,
\item [(c)] $G$ contains a strongly spanning maximum $k$-edge-colorable subgraph.
\end{enumerate}
\end{theorem}
\begin{proof} Since a maximum $k$-edge-colorable subgraph is a $k$-edge-colorable subgraph, (c) implies (b). Moreover, since a strongly spanning $k$-edge-colorable subgraph is a $[1,k]$-factor, (b) implies (a). By Lemma \ref{ContinuousSpectra}, we already have that (b) implies (c). Thus, it suffices to show that (a) implies (b). 

Let $H$ be a $[1,k]$-factor of $G$. Let $T$ be a sub-forest of $H$ with $V(T)=V(H)=V(G)$. Clearly, $T$ is a strongly spanning subgraph of $G$. Since $T$ is $\Delta(T)$-edge-colorable and  $\Delta(T)\leq \Delta(H)\leq k$, we have $T$ is $k$-edge-colorable. Hence (a) implies (b). The proof of Theorem \ref{characthm} is completed.
$\square$
\end{proof}

\begin{corollary} If a multigraph has a perfect matching, then it has a strongly spanning maximum $k$-edge-colorable subgraph for all values of $k$.
\end{corollary}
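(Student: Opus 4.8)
The plan is to deduce the statement directly from Lemma~\ref{ContinuousSpectra}. Let $M$ be a perfect matching of $G$ and fix an arbitrary $k\geq 1$. First I would form the spanning subgraph $H=(V(G),M)$; since $M$ is perfect we have $d_H(v)=1$ for every vertex $v$, so $H$ is spanning, and, being a single matching, $H$ is $1$-edge-colorable. Consequently $H$ is $k$-edge-colorable for every $k\geq 1$: one simply partitions $E(H)$ into the matching $M$ together with $k-1$ empty color-classes. Hence $G$ has a spanning $k$-edge-colorable subgraph for each $k\geq 1$.

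Now I would invoke Lemma~\ref{ContinuousSpectra}: because $G$ has a spanning $k$-edge-colorable subgraph, it also has a spanning \emph{maximum} $k$-edge-colorable subgraph. Since $k$ was arbitrary, this is precisely the assertion of the corollary. (Equivalently, a perfect matching is a $[1,1]$-factor, so $sp_1(G)=1$, and by the Theorem $sp_3(G)=1$; but $sp_3(G)$ records only the smallest feasible $k$, so to cover \emph{all} values of $k$ one still needs the padding observation above together with Lemma~\ref{ContinuousSpectra}.)

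I do not expect any real obstacle here. The only point that needs a moment's attention is the convention that a $j$-edge-colorable graph also counts as $k$-edge-colorable for every $k\geq j$ (via empty color-classes), which is what guarantees that the hypothesis of Lemma~\ref{ContinuousSpectra} is satisfied for arbitrarily large $k$, and the elementary fact that a perfect matching covers every vertex, so that $H$ is genuinely spanning.
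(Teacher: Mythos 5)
Your proof is correct and follows essentially the same route the paper intends: a perfect matching is a spanning $1$-edge-colorable subgraph, hence (padding with empty color-classes) a spanning $k$-edge-colorable subgraph for every $k$, and Lemma~\ref{ContinuousSpectra} then upgrades it to a spanning \emph{maximum} $k$-edge-colorable subgraph. Your side remark that $sp_3(G)=1$ alone does not cover all $k$ is a sensible clarification, not a deviation from the paper's argument.
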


We are ready to introduce our main parameter. If $G$ is a multigraph without isolated vertices, then define:
\begin{equation*}
sp(G)=\min\{k:G \textrm{ has a strongly spanning maximum $k$-edge-colorable subgraph}\}.
\end{equation*} Observe that due to Theorem \ref{characthm}, $sp(G)$ coincides with the least $k$ such that $G$ has a strongly spanning $k$-edge-colorable subgraph. Similarly, $sp(G)$ represents the smallest $k$ for which $G$ has a $[1,k]$-factor.

A multigraph $G$ without isolated vertices can be viewed as a $[1,\Delta(G)]$-factor of $G$, thus we have: 
\begin{equation}\label{trivialupperbound}
1\leq sp(G)\leq \Delta(G).
\end{equation}

The following theorem of Tutte characterizes multigraphs $G$ with $sp(G)=1$.

\begin{theorem}(Tutte, see Theorem 3.1.1 from \cite{Lov}) A multigraph $G$ has a perfect matching, if and only if for any $S\subseteq V(G)$ one has $o(G-S)\leq |S|$, where for a multigraph $H$ $o(H)$ denotes the number of components of $H$ that contain odd number of vertices.
\end{theorem}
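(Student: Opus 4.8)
The plan is to prove the two implications of the equivalence separately; the forward direction is a short counting argument and the converse carries all the weight. For necessity, suppose $G$ has a perfect matching $M$ and fix $S\subseteq V(G)$. If $C$ is a component of $G-S$ with an odd number of vertices, then $M$ cannot match all vertices of $C$ using only edges inside $C$ (an odd vertex set is never covered by edges internal to it), so $M$ contains an edge from $C$ to $S$; distinct odd components of $G-S$ use distinct vertices of $S$ in this way because $M$ is a matching, and therefore $o(G-S)\le|S|$.

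For the converse, assume the Tutte condition holds for every $S$. Applying it with $S=\emptyset$ gives $o(G)\le 0$, so all components of $G$ are even and in particular $n:=|V(G)|$ is even. Suppose for contradiction that $G$ has no perfect matching. Since adding an edge to a graph only merges components of $G-S$ and hence cannot increase $o(G-S)$, the family of graphs on the vertex set $V(G)$ that contain $G$, satisfy the Tutte condition, and have no perfect matching is nonempty; among them I choose one, call it $G^{*}$, with the largest possible number of edges. Then $G^{*}$ has no perfect matching, but $G^{*}+e$ does for every non-edge $e$ of $G^{*}$ (it satisfies Tutte, contains $G$, and has more edges). Let $U$ be the set of vertices of $G^{*}$ that are adjacent to every other vertex.

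The crux is the claim that $G^{*}-U$ is a disjoint union of complete graphs. If it were not, some component of $G^{*}-U$ would contain three vertices $x,y,z$ with $xy,yz\in E(G^{*})$ and $xz\notin E(G^{*})$; since $y\notin U$ there is also a vertex $w\notin\{x,z\}$ with $yw\notin E(G^{*})$. By the maximality of $G^{*}$, pick perfect matchings $M_{1}$ of $G^{*}+xz$ and $M_{2}$ of $G^{*}+yw$, and note that $xz\in M_{1}$ and $yw\in M_{2}$, for otherwise $G^{*}$ itself would be matchable. The graph $M_{1}\triangle M_{2}$ is a disjoint union of even cycles alternating between $M_{1}$ and $M_{2}$, and both $xz$ and $yw$ lie on such cycles. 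If they lie on different cycles, then replacing the $M_{2}$-edges of the cycle through $yw$ by its $M_{1}$-edges gives a perfect matching containing neither $xz$ nor $yw$, hence a perfect matching of $G^{*}$ -- impossible. If instead $xz$ and $yw$ lie on a common cycle $C$, I would delete the edge $xz$ from $C$, find the position of $y$ on the resulting alternating path, and stitch a suitable alternating stretch of $C$ to exactly one of the $G^{*}$-edges $xy$ or $yz$ so as to produce a perfect matching of $G^{*}$ avoiding both $xz$ and $yw$; which of $xy,yz$ is used and which stretches of $C$ are toggled is decided by a short parity case analysis according to where $y$ sits on $C$. Getting this last splicing correct is the step I expect to be the main obstacle.

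Granting the claim, the proof finishes cleanly. The Tutte condition with $S=U$ yields $o(G^{*}-U)\le|U|$, so I match one vertex of each odd component of $G^{*}-U$ to a private vertex of $U$ -- possible since every vertex of $U$ is adjacent to all others. Each component of $G^{*}-U$ now has an even number of vertices left, and being complete it is matched internally. Finally, the $|U|-o(G^{*}-U)$ still-unmatched vertices of $U$ induce a clique and their count is even: $n-|U|$ is the total number of vertices of $G^{*}-U$, which is $\equiv o(G^{*}-U)\pmod{2}$, so $|U|\equiv n-o(G^{*}-U)\equiv o(G^{*}-U)\pmod{2}$ because $n$ is even, whence $|U|-o(G^{*}-U)$ is even. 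Matching these leftover vertices of $U$ in pairs completes a perfect matching of $G^{*}$, contradicting the choice of $G^{*}$; so the counterexample $G$ cannot exist and the characterization follows.
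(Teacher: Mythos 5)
A point of comparison first: the paper does not prove this theorem at all -- it is quoted as a classical result with references to Lov\'asz--Plummer and West -- so your proposal can only be measured against the standard textbook proof, and that is in fact exactly the argument you are reproducing: the edge-maximal graph $G^{*}$, the set $U$ of universal vertices, the claim that $G^{*}-U$ is a disjoint union of complete graphs, and the final assembly. Your necessity argument, the maximality framework (adding edges preserves the Tutte condition, so $G^{*}+e$ has a perfect matching for every non-edge $e$), the different-cycles case, and the closing parity count for the leftover vertices of $U$ are all correct as written.

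The one place where you have not actually proved anything is the place you yourself flag: the case where $xz$ and $yw$ lie on the same alternating cycle $C$ of $M_{1}\triangle M_{2}$. Saying that a ``suitable alternating stretch'' will be stitched to $xy$ or $yz$ ``by a short parity case analysis'' is a statement of intent, not an argument, and this case is the crux of the whole theorem. It does go through; here is the missing splice. Delete both $xz$ and $yw$ from $C$; since $y,w\notin\{x,z\}$, this leaves two vertex-disjoint alternating arcs whose four ends are $x,z,y,w$. By the symmetry between $x$ and $z$ (both are $G^{*}$-neighbours of $y$) you may assume one arc $Q_{1}$ runs from $z$ to $w$ and the other arc $Q_{2}$ runs from $y$ to $x$. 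Reading the alternation around $C$: the end-edge of $Q_{1}$ at $z$ lies in $M_{2}$ and its end-edge at $w$ lies in $M_{1}$, so the $M_{1}$-edges of $Q_{1}$ cover every vertex of $Q_{1}$ except $z$; likewise the end-edge of $Q_{2}$ at $y$ lies in $M_{1}$ and its end-edge at $x$ lies in $M_{2}$, so the $M_{2}$-edges of $Q_{2}$ cover every vertex of $Q_{2}$ except $y$. Hence $\{yz\}\cup\bigl(M_{1}\cap E(Q_{1})\bigr)\cup\bigl(M_{2}\cap E(Q_{2})\bigr)$ is a matching of $G^{*}$ covering $V(C)$, and none of its edges is $xz$ or $yw$ (in the mirror configuration the chord $xy$ plays the role of $yz$). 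Off the cycle, $C$ is $M_{1}$-alternating, so $M_{1}$ matches $V(G^{*})\setminus V(C)$ to itself using only edges of $G^{*}$; adding these edges yields a perfect matching of $G^{*}$, the desired contradiction. With that paragraph inserted your proof is complete, and it coincides with the proof in West's book, one of the two sources the paper cites for this theorem.
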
 We will also need the Tutte-Berge formula, which can be shown to be equivalent to the mentioned theorem of Tutte (see Theorem 3.1.14 from \cite{Lov}).

\begin{theorem}(Tutte-Berge formula) For any multigraph $G$ 
\begin{equation*}
\max_{S\subseteq V(G)}(o(G-S)- |S|)=|V(G)|-2\nu(G).
\end{equation*}
\end{theorem}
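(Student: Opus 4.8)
The plan is to derive the formula from Tutte's theorem stated just above. Write $d=\max_{S\subseteq V(G)}(o(G-S)-|S|)$; the goal is to prove $|V(G)|-2\nu(G)=d$, which I would do by establishing the two opposite inequalities.

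First I would prove $|V(G)|-2\nu(G)\geq d$, which needs no heavy machinery. Fix $S\subseteq V(G)$ and any matching $M$ of $G$. In each odd component $C$ of $G-S$ the edges of $M$ with both ends in $C$ leave at least one vertex of $C$ unmatched inside $C$, and such a vertex is either $M$-exposed or $M$-matched to a vertex of $S$. Since at most $|S|$ of these designated vertices can be matched into $S$, the matching $M$ leaves at least $o(G-S)-|S|$ vertices of $G$ exposed; choosing $M$ maximum and then maximizing over $S$ gives $|V(G)|-2\nu(G)\geq d$. Along the way I would also record two elementary remarks: taking $S=\emptyset$ shows $d\geq o(G)\geq 0$, and since for every $S$ the even components of $G-S$ contribute an even number of vertices, $|V(G)|-|S|=|V(G-S)|\equiv o(G-S)\pmod 2$, whence $o(G-S)-|S|\equiv|V(G)|\pmod 2$ and therefore $d\equiv|V(G)|\pmod 2$.

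The substance of the argument is the reverse inequality $|V(G)|-2\nu(G)\leq d$, and here is where Tutte's theorem enters. Form $G'$ by adjoining $d$ new vertices to $G$, each adjacent to all other vertices of $G'$; by the parity remark $|V(G')|=|V(G)|+d$ is even. I would then verify Tutte's condition for $G'$. If $S'\subseteq V(G')$ contains all $d$ new vertices, write $S'=S\cup\{\text{new vertices}\}$ with $S\subseteq V(G)$; then $G'-S'=G-S$, so $o(G'-S')=o(G-S)\leq d+|S|=|S'|$ by the definition of $d$. If $S'$ omits some new vertex $x$, then $x$ is adjacent to every vertex of $G'-S'$, so $G'-S'$ is connected and $o(G'-S')\leq 1$; moreover $o(G'-S')=1$ is only possible when $|S'|\geq 1$, since for $S'=\emptyset$ the graph $G'$ is connected with an even number of vertices, hence has no odd component. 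Thus $o(G'-S')\leq|S'|$ in all cases, Tutte's theorem applies, and $G'$ has a perfect matching $M'$. Deleting the $d$ new vertices together with the $M'$-edges incident to them removes at most $d$ vertices of $G$ from the matched set, so $M'$ restricted to $G$ is a matching of $G$ leaving at most $d$ vertices exposed, giving $|V(G)|-2\nu(G)\leq d$. Combining the two inequalities yields the formula.

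I expect the only delicate point to be the verification of Tutte's condition for $G'$ in the case where $S'$ omits a new vertex: ruling out an odd component of $G'-S'$ when $S'=\emptyset$ is exactly what forces me to know $|V(G')|$ is even, which is why the parity observation $d\equiv|V(G)|\pmod 2$ has to be in place before constructing $G'$. Everything else is routine bookkeeping about matchings.
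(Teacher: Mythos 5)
Your proof is correct: both inequalities are established soundly, the parity observation $d\equiv|V(G)|\pmod 2$ is exactly what is needed to make the join of $d$ universal vertices have even order, and the verification of Tutte's condition for $G'$ covers all cases. The paper itself gives no proof, merely citing that the formula is equivalent to Tutte's theorem, and your argument is precisely the standard derivation from Tutte's theorem that this remark alludes to, so you have simply supplied the details the paper leaves to the reference.
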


Now, let us characterize the class of multigraphs with $sp(G)=\Delta(G)$. Clearly, if $G_1,...,G_t$ are components of $G$, then $sp(G)=\max\{sp(G_1),...,sp(G_t)\}$. Thus, a multigraph $G$ satisfies the equality $sp(G)=\Delta(G)$ if and only if some of its components satisfies the same equality. This observation enables us to focus on the characterization of connected multigraphs $G$ that satisfy $sp(G)=\Delta(G)$.

\begin{lemma}\label{spDeltaTreeCycle} If $G$ is a connected multigraph with $sp(G)=\Delta(G)$, then either $G$ is an odd cycle or $G$ is a tree.
\end{lemma}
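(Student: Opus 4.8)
The plan is to argue by contraposition: if $G$ is a connected graph that is neither an odd cycle nor a tree, then $sp(G) < \Delta(G)$, i.e. $G$ has a $[1,k]$-factor with $k \leq \Delta(G)-1$. First I would dispose of the case $\Delta(G) \leq 2$. A connected graph with $\Delta(G) \leq 2$ is a path or a cycle; a path is a tree, an odd cycle is excluded, so we are left with even cycles and $K_1$-free trivialities — and an even cycle has a perfect matching, hence $sp(G)=1 < 2 = \Delta(G)$. (The single edge $K_2$ is a tree, so it is not a counterexample.) Thus we may assume $\Delta(G) \geq 3$.

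Now suppose $\Delta(G) = \Delta \geq 3$ and $G$ is connected but not a tree, so $G$ contains a cycle $C$. The key observation is that, because the three parameters agree (by the Theorem already proved), it suffices to exhibit a spanning subgraph of $G$ with maximum degree at most $\Delta - 1$. The construction: start from a spanning connected subgraph built around $C$ — for instance, take a spanning tree $T$ of $G$, then add one extra edge of $C$ not already in $T$ to obtain a spanning connected subgraph $G'$ containing exactly one cycle. If $\Delta(G') \leq \Delta - 1$ we are done. Otherwise some vertices of $G'$ still have degree $\Delta$; the point is that in a connected unicyclic graph such a vertex $v$ of degree $\Delta \geq 3$ necessarily has a pendant edge hanging off it, or more carefully, lies on at most one cycle, so we can locally reroute: delete an edge at $v$ that leads into a subtree and is not a bridge cutting off a vertex — but since $G'$ is nearly a tree this requires care. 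A cleaner route is to use the $[1,k]$-factor viewpoint directly: we want a spanning subgraph $H$ with $1 \leq d_H(v) \leq \Delta-1$ for all $v$. Equivalently, we must delete from $G$ a set $F$ of edges such that no vertex becomes isolated (every vertex retains an incident edge) and every vertex of degree $\Delta$ loses at least one edge. This is precisely an edge-deletion problem that can be phrased as finding, for the set $W = \{v : d_G(v) = \Delta\}$, a subgraph $D$ with $d_D(v) \geq 1$ for $v \in W$ and $d_D(v) \leq d_G(v) - 1$ for all $v$ (so that deleting $D$'s edges leaves everyone covered) — and such a $D$ exists unless $G$ is "rigid" in a way that forces $G$ to be an odd cycle or tree.

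The main obstacle, and the heart of the argument, is showing that the only connected obstructions to removing such an edge set are exactly odd cycles and trees. I would handle this by an extremal/contradiction argument: pick $H$ a spanning subgraph of $G$ minimizing $\Delta(H)$ (so $\Delta(H) = sp(G)$ by the Theorem), and suppose $\Delta(H) = \Delta(G) = \Delta$. Let $v$ have $d_H(v) = \Delta$. Since $G$ is not a tree, consider a cycle; either $v$ lies on a cycle of $H$ or we can find an alternating-type structure. If $v$ lies on a cycle $C_0$ of $H$, deleting one edge of $C_0$ at $v$ keeps $H$ spanning (no vertex of $C_0$ is isolated since it still has its other cycle-edge, and no other vertex is affected) and drops $d(v)$; if this creates no new degree-$\Delta$ vertex we contradict minimality, and if it does, we have merely moved the problem, so we track a potential function (e.g. number of degree-$\Delta$ vertices, or sum of degrees) to force eventual descent. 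The genuinely delicate case is when no degree-$\Delta$ vertex lies on any cycle of $H$ and yet $\Delta(H) = \Delta$: then every cycle of $H$ consists entirely of degree-$(\leq \Delta-1)$ vertices, and a counting argument — comparing $\sum d_H(v)$ against $|V|$ and the number of independent cycles — should pin down that $H$, hence $G$, is either acyclic (a tree, since connected) or the cycles are so constrained that $G$ reduces to an odd cycle. I expect the odd-cycle case to surface precisely because an even cycle has a perfect matching (giving $sp = 1$), whereas an odd cycle's failure to have a perfect matching, combined with $\Delta = 2$, is what pushes $sp$ up to $\Delta$; so the $\Delta \geq 3$ analysis should cleanly exclude cycles and leave only trees, and the $\Delta = 2$ analysis (done first) supplies the odd cycle.
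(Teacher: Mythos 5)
Your proposal sets up the right target (the contrapositive: a connected graph that contains a cycle and is not an odd cycle satisfies $sp(G)\leq\Delta(G)-1$, i.e.\ has a $[1,\Delta(G)-1]$-factor), and the $\Delta(G)\leq 2$ case is handled correctly. But the heart of the argument is never actually carried out: for $\Delta(G)\geq 3$ you propose taking a spanning subgraph $H$ minimizing $\Delta(H)$ and performing local deletions/swaps at a vertex $v$ with $d_H(v)=\Delta(G)$, and you yourself note that a swap (delete $(v,u)$, add $(u,w)$) can create a new vertex of degree $\Delta(G)$ at $w$, so the process ``merely moves the problem.'' No monotone potential is exhibited, and none of the obvious candidates (number of degree-$\Delta$ vertices, sum of degrees) visibly decreases under such a swap, so the descent you invoke is not established. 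Worse, in your ``delicate case'' you conclude ``$H$, hence $G$, is either acyclic'': this inference is invalid, since $H$ is only a spanning subgraph and $G$ may well contain cycles that $H$ does not use --- indeed that is the typical situation (e.g.\ a graph with a perfect matching and many cycles). Any correct argument must exploit edges of $G$ outside $H$, which your counting plan inside $H$ does not do.

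For comparison, the paper proves the lemma by taking a counterexample $G$ minimizing $|E(G)|$ and arguing structurally rather than by local swaps: first it shows $G$ must be unicyclic (deleting a cycle edge from a graph with two cycles keeps it connected and forces $sp(G-e)=\Delta(G-e)$, contradicting minimality via the induction hypothesis); then it shows every degree-one vertex of the counterexample is adjacent to the unique cycle $C$, so all vertices of degree at least two lie on $C$; and finally it splits into two cases (two adjacent degree-two vertices exist, or not) and in each case explicitly constructs a spanning $(\Delta(G)-1)$-edge-colorable subgraph --- in the first case by suppressing the two adjacent degree-two vertices and lifting a subgraph of the smaller graph back, in the second by taking all pendant edges together with a suitable matching along $C$. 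If you want to salvage your approach you would need to either find a genuine potential function for the swap process or, more realistically, switch to an induction on $|E(G)|$ with explicit reductions as the paper does.
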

\begin{proof}Let $G$ be a counter-example to this statement minimizing $|E(G)|$. Let us show that $G$ is unicyclic, that is, $G$ contains exactly one cycle.

Since $G$ is not a tree, it must contain a cycle. Let us assume that $G$ contains at least two cycles, and let $e$ be an edge of $G$ lying on a cycle of $G$. Observe that:
\begin{equation*}
sp(G)\leq sp(G-e)\leq \Delta(G-e) \leq \Delta(G).
\end{equation*}Taking into account that $sp(G)=\Delta(G)$, we have that $sp(G-e)=\Delta(G-e)$. Since $G-e$ is connected and $|E(G-e)|=|E(G)|-1<|E(G)|$, we have that $G-e$ is either a tree or an odd cycle. Now, if $G-e$ is a tree, then $G$ must be unicyclic \cite{West}, which we assumed to be not the case. Hence $G-e$ is an odd cycle. However, this case is also impossible since if $G-e$ is an odd cycle, then $\Delta(G)=3$ and $sp(G)\leq 2$, and therefore $sp(G)<\Delta(G)$, which contradicts the choice of $G$. We conclude that $G$ is unicyclic.

Let $C$ be the cycle of $G$. Observe that since $G$ is not a cycle ($G\neq C$), it must contain a vertex of degree one. 

Let us show that any degree one vertex of $G$ is adjacent to a vertex of $C$. On the opposite assumption, we can consider a vertex $u$ of $G$ such that $d_G(u)=p+1\geq 2$ and $u$ is adjacent to $p\geq 1$ vertices of degree one. Let $u_1,...,u_p$ be the degree one neighbours of $u$, and let $v$ be the other neighbour of $u$. Observe that since $G$ is not a tree, $v$ is not of degree one. Let $G_1$ be the component of $G-(u,v)$ containing the vertex $v$. Clearly, $C$ is a cycle of $G_1$. We need to consider two cases.\\

Case 1: $G_1=C$. In this case, we have that $\Delta(G)=\max\{d_G(v),d_G(u)\}=\max\{3,p+1\}$ and $sp(G)\leq \max\{2,p\}$, hence $sp(G)<\Delta(G)$, which contradicts the choice of $G$.\\

Case 2: $G_1\neq C$. Since $G_1$ is connected, $G_1$ contains a cycle and $|E(G_1)|<|E(G)|$, we have that $sp(G_1)\leq \Delta(G_1)-1<\Delta(G)$. Hence $sp(G)\leq \max\{sp(G_1),p\}<\Delta(G)$, since $\Delta(G)\geq p+1$, which contradicts the choice of $G$.\\

The considered two cases imply that any degree one vertex of $G$ is adjacent to a vertex of $C$. Observe that this implies that all vertices of $G$ that are of degree at least two, lie on $C$. We are ready to complete the proof of the lemma. For this purpose we consider the following two cases, and in each of them we exhibit a contradiction.\\

Case 1: $G$ contains two degree two vertices that are adjacent. Let $u$ and $v$ be adjacent degree two vertices of $G$, and let $u_1$ and $v_1$ be the other ($\neq v$ and $\neq u$) neighbours of $u$ and $v$, respectively. Consider the multigraph $G'$ obtained from $G$ by removing the vertices $u$ and $v$, and adding an edge connecting $u_1$ and $v_1$.  Since $G'$ is connected and $|E(G')|<|E(G)|$, we have that $sp(G')\leq \Delta(G')-1=\Delta(G)-1$. Let $H'$ be a strongly spanning $(\Delta(G)-1)$-edge-colorable subgraph of $G'$. Consider a subgraph $H$ of $G$ obtained from $H'$ as follows:
\begin{equation*}
H=\left\{
\begin{array}{ll}
(H'\backslash \{(u_1,v_1)\})\cup \{(u,u_1), (v,v_1)\}, & \text{if } (u_1,v_1)\in E(H');\\
 H'\cup \{(u,v)\}, & \text{if } (u_1,v_1)\notin E(H').
\end{array}
\right.
\end{equation*}It is easy to see that $H$ is a strongly spanning $(\Delta(G)-1)$-edge-colorable subgraph of $G$, hence $sp(G)\leq \Delta(G)-1$ contradicting the choice of $G$.\\

Case 2: $G$ contains no two degree two vertices that are adjacent. Observe that this case includes the case when there are no degree two vertices in $G$. For each degree two vertex $u$ of $G$ choose the edge $(u,u')$ incident to $u$ such that $u'$ is the next neighbour of $u$ in the direction of clockwise circumvention of $C$, and let $M$ be the matching of $G$ that contains all such edges $(u,u')$. Consider a subgraph $H$ of $G$ obtained as follows: all edges of $G$ that are incident to a degree one vertex add to $H$, and add $M$ to $H$, too. Clearly, $H$ is a strongly spanning $(\Delta(G)-1)$-edge-colorable subgraph of $G$, hence $sp(G)\leq \Delta(G)-1$ contradicting the choice of $G$.\\

The proof of Lemma \ref{spDeltaTreeCycle} is completed.
$\square$
\end{proof}

Lemma \ref{spDeltaTreeCycle} implies that in order to characterize the connected multigraphs $G$ with $sp(G)= \Delta(G)$, we can focus on trees. For this purpose, for an arbitrary tree $T$, we introduce the following two sets:
\begin{equation*}
A=\{v\in V(T):d_T(v)=\Delta(T)\}, B=V(T)\backslash A.
\end{equation*}

\begin{lemma}\label{treeDeltaMinus1} Let $T$ be a tree with $|V(T)|\geq 3$. Then for any $v\in B$ there is a $(\Delta(T)-1)$-edge-colorable subgraph $H$ of $G$, such that either $V(H)=V(T)$ or $V(T)\backslash V(H)=\{v\}$.
\end{lemma}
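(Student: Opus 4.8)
Since any subgraph of a tree is a forest and a forest of maximum degree $d$ is $d$-edge-colorable, it suffices to produce a subgraph $H$ of $T$ with $\Delta(H)\le\Delta(T)-1$ that covers every vertex of $T$ except possibly $v$. Writing $\Delta=\Delta(T)$, this amounts to choosing a set $F\subseteq E(T)$ to delete so that (a) every vertex of $A$ meets $F$, so that its degree drops below $\Delta$, and (b) every vertex other than $v$ still meets an edge outside $F$, so that only $v$ may become isolated. I would establish the existence of such an $F$ by induction on $|V(T)|$. The case $\Delta\le 2$ is trivial: then $T$ is a path and $v$ one of its ends, and selecting alternate edges yields a matching covering all vertices when $|V(T)|$ is even and all but $v$ when $|V(T)|$ is odd.

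Assume $\Delta\ge 3$ and root $T$ at $v$, which is legitimate since $v\in B$; note that $T$ is not a star centered at $v$, as that would give $d_T(v)=\Delta$. Let $u$ be the parent of a deepest leaf of the rooted tree, let $p$ be the parent of $u$, and let $L$ be the set of children of $u$; then $u\ne v$ (so $p$ exists), every vertex of $L$ is a leaf, and $t:=|L|=d_T(u)-1\ge 1$. If $u\notin A$, I would apply the induction hypothesis to $T-L$: this tree still has maximum degree $\Delta$ (the vertices of $A$ are not leaves, hence avoid $L$ and retain their degrees) and still has $v\in B$, so it possesses a subgraph $H'$ of the desired type; then $H:=H'\cup\{u\ell:\ell\in L\}$ works, because $u$ has degree at most $1$ in $T-L$, so $d_H(u)\le 1+t=d_T(u)\le\Delta-1$, and every vertex of $L$ becomes covered.

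The substantial case is $u\in A$, where a careless re-attachment of $L$ could push $d_H(u)$ back up to $\Delta$. Here I would split again. If $A$ has a vertex different from both $u$ and $p$, then $T-(\{u\}\cup L)$ again has maximum degree $\Delta$ and satisfies $v\in B$, so the induction hypothesis provides a suitable $H''$; re-attaching $L$ at $u$ but \emph{without} the edge $up$ gives $d_H(u)=t=\Delta-1$ and keeps $p$ covered, since $p$ lies in the smaller tree (or $p=v$, which is allowed to be missed). If on the other hand $A\subseteq\{u,p\}$, I would delete the single edge $up$: this lowers the degree of $u$ — and also of $p$ when $p\in A$ — below $\Delta$, alters no other degree, and isolates a vertex only if $d_T(p)=1$, in which case $p$ equals the root $v$. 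In every branch $H$ is a forest with $\Delta(H)\le\Delta-1$ covering $V(T)$ or $V(T)\setminus\{v\}$, which is what we want.

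I expect the sub-case analysis for $u\in A$ to be the main obstacle: one must select what to recurse on and which edges to put back so that the degree of $u$, and of its parent, is never raised to $\Delta$ and so that no vertex other than $v$ loses all its edges. The point that makes it go through is to isolate the awkward configurations — those in which $u$, or $u$ together with $p$, carries essentially all of the maximum degree — and dispose of them directly by removing the one edge $up$, which conveniently lowers the degrees of as many as two maximum-degree vertices at once. I would also check, routinely, that every reduction preserves the standing hypotheses ($\Delta\ge 3$, connectedness, $v\in B$, and at least three vertices), so that the induction is well founded.
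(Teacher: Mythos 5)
Your argument is correct, and it takes a somewhat different route from the paper's. Your reduction is legitimate (every subgraph of a tree is a forest, and a forest of maximum degree $d$ is $d$-edge-colorable), the base case $\Delta(T)\le 2$ works since $v\in B$ forces $v$ to be an endpoint of the path, and in each of your cases the degree and coverage claims check out: when $u\notin A$ no vertex of $A$ is adjacent to $L$, so $\Delta$ and $v\in B$ are preserved in $T-L$ and $d_H(u)\le 1+t=d_T(u)\le\Delta(T)-1$; when $u\in A$ and $A\not\subseteq\{u,p\}$, the witness vertex of $A\setminus\{u,p\}$ is adjacent neither to $u$ nor to $L$, so the recursion on $T-(\{u\}\cup L)$ is sound and reattaching $L$ without $up$ gives $d_H(u)=\Delta(T)-1$; and when $A\subseteq\{u,p\}$, deleting the single edge $up$ can isolate a vertex only if $d_T(p)=1$, which forces $p=v$. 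The paper, by contrast, gives a direct one-pass construction with no induction and no case analysis: it partitions $V(T)$ into the distance levels $V_0=\{v\},V_1,\dots$ from $v$, repeatedly adds to $H$ all edges between the deepest remaining level and the level above it, and deletes the covered vertices; each star center $w$ created this way still has its own parent edge left out, so $d_H(w)\le d_T(w)-1\le\Delta(T)-1$ automatically, and since the resulting $H$ is a disjoint union of stars of degree at most $\Delta(T)-1$, its $(\Delta(T)-1)$-edge-colorability is immediate, with no appeal to the class-1 property of forests. Your induction is longer and uses that extra (standard) fact about forests, but it is self-contained and locally verifiable, whereas the paper's peeling argument is shorter and yields the coloring for free because the subgraph it builds is just a star forest; both proofs rest on the same underlying idea of processing the tree rooted at $v$ from the bottom up while reserving each center's edge toward its parent.
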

\begin{proof}We will give a method for the construction of such a subgraph. We start with $H=\emptyset$. Consider the following partition of vertices of $T$: 
\begin{equation*}
V_0=\{v\}, V_1=\{u:(v,u)\in E(T)\},...,V_p=\{u:(z,u)\in E(T)\textrm{ and } z\in V_{p-1}\}.
\end{equation*}Now, add all edges $(z,u)$ to $H$, such that $u\in V_p$ and $z\in V_{p-1}$.  Observe that for any $w\in V(H)\cap V_{p-1}$ one has $d_H(w)\leq \Delta(T)-1$ since $w$ has one neighbour in $V_{p-2}$. After this, remove all edges that we have added to $H$ and the vertices incident to them from $T$. Repeat this process until $V(T)$ becomes empty or $V(T)=\{v\}$.

It can be easily seen that the components of the resulting subgraph $H$ of $T$ are stars, such that their centers are of degree at most $\Delta(T)-1$. Hence $H$ is $(\Delta(T)-1)$-edge-colorable. Moreover, it meets the requirements of the lemma.
$\square$
\end{proof}

In the following two corollaries, for a tree $T$, $H$ denotes the subgraph from Lemma \ref{treeDeltaMinus1}. 

\begin{corollary}\label{spdeltacor}If $T$ is a tree with $|E(T)|\geq 3$ and $sp(T)=\Delta(T)$, then $V(T)\backslash V(H)=\{v\}$.
\end{corollary}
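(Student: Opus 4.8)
The plan is to obtain the corollary as a one-step consequence of Lemma~\ref{treeDeltaMinus1}, the definition of $sp_2$, and the equality $sp(T)=sp_2(T)$ furnished by the Theorem. First I would check that the hypotheses of Lemma~\ref{treeDeltaMinus1} are met: since $T$ is a tree with $|E(T)|\ge 3$ we have $|V(T)|=|E(T)|+1\ge 3$, so the lemma applies. Moreover $\Delta(T)\ge 2$, hence every leaf of $T$ lies in $B$; in particular $B\neq\emptyset$, so a vertex $v\in B$ and the associated subgraph $H$ from Lemma~\ref{treeDeltaMinus1} actually exist. By that lemma, $H$ is a $(\Delta(T)-1)$-edge-colorable subgraph of $T$ satisfying either $V(H)=V(T)$ or $V(T)\backslash V(H)=\{v\}$.

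The only thing left to do is to rule out the first alternative. Suppose, for contradiction, that $V(H)=V(T)$. Because $H$ is constructed as a union of edges (its components are stars, as noted in the proof of Lemma~\ref{treeDeltaMinus1}), every vertex of $V(H)$ is incident to an edge of $H$; hence $H$ is in fact a \emph{spanning} $(\Delta(T)-1)$-edge-colorable subgraph of $T$. By the definition of $sp_2$ this gives $sp_2(T)\le \Delta(T)-1$, and since $sp(T)=sp_2(T)$ we obtain $sp(T)\le \Delta(T)-1<\Delta(T)$, contradicting the assumption $sp(T)=\Delta(T)$. Therefore $V(H)=V(T)$ is impossible, and the remaining possibility $V(T)\backslash V(H)=\{v\}$ must hold, which is precisely the assertion.

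There is no substantial obstacle here: the statement falls out immediately from Lemma~\ref{treeDeltaMinus1} once one observes that the excluded case would contradict $sp(T)=\Delta(T)$. The only point deserving a moment's attention is verifying that $V(H)=V(T)$ really forces $H$ to be spanning, i.e.\ to have no isolated vertices; this is clear from the way $H$ is built in the proof of Lemma~\ref{treeDeltaMinus1}, where vertices enter $V(H)$ only together with an incident edge.
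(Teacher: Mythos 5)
Your argument is correct and is exactly the reasoning the paper implicitly relies on: the corollary is stated as an immediate consequence of Lemma~\ref{treeDeltaMinus1}, since the alternative $V(H)=V(T)$ would make $H$ a spanning $(\Delta(T)-1)$-edge-colorable subgraph and force $sp(T)\leq\Delta(T)-1$, contradicting $sp(T)=\Delta(T)$. Your extra check that every vertex of $V(H)$ is incident to an edge of $H$ (so $H$ is genuinely spanning) is a sensible, correct detail consistent with the construction in the lemma's proof.
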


\begin{corollary}\label{spdeltacor2}If $T$ is a tree with $|E(T)|\geq 3$ and the subgraph $H$ does not cover $v$, then there is a strongly spanning $\Delta(T)$-edge-colorable subgraph $H'$ of $T$, such that $d_{H'}(v)=1$.
\end{corollary}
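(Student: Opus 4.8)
The plan is to build $H'$ directly from the subgraph $H$ supplied by Lemma \ref{treeDeltaMinus1}. By hypothesis $V(T)\backslash V(H)=\{v\}$, so $H$ is a $(\Delta(T)-1)$-edge-colorable subgraph of $T$ covering every vertex except $v$. The first thing I would record is a degree bound on $H$: according to the construction in the proof of Lemma \ref{treeDeltaMinus1}, every component of $H$ is a star whose centre has degree at most $\Delta(T)-1$. Since $|E(T)|\geq 3$ forces $|V(T)|\geq 4$ and hence $\Delta(T)\geq 2$, it follows that every vertex of $H$ has degree at most $\Delta(T)-1$ (centres by the above, leaves because they have degree $1$).

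Next I would pick an arbitrary neighbour $u$ of $v$ in $T$; such a $u$ exists because $T$ is connected with at least two vertices. Define $H'=H\cup\{(u,v)\}$. Then $V(H')=V(H)\cup\{v\}=V(T)$, so $H'$ is spanning, and since $H$ misses $v$ we get $d_{H'}(v)=d_H(v)+1=1$, as required. For the maximum degree, only $u$ and $v$ change degree, and $d_{H'}(u)=d_H(u)+1\leq(\Delta(T)-1)+1=\Delta(T)$, while $d_{H'}(v)=1\leq\Delta(T)$; hence $\Delta(H')\leq\Delta(T)$.

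Finally I would observe that $H'$ is a subgraph of the tree $T$, so $H'$ is a forest, and any forest with maximum degree $d$ is $d$-edge-colorable (equivalently, $\chi'(F)=\Delta(F)$ for every forest $F$); therefore $H'$ is $\Delta(T)$-edge-colorable. Combining the three observations, $H'$ is a spanning $\Delta(T)$-edge-colorable subgraph of $T$ with $d_{H'}(v)=1$.

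There is essentially no obstacle in this argument; the single point that needs the hypothesis is that adjoining the one edge $(u,v)$ at $u$ does not push $d_{H'}(u)$ above $\Delta(T)$, and this is exactly what the star structure of $H$ from Lemma \ref{treeDeltaMinus1} guarantees (every vertex of $H$ already has degree at most $\Delta(T)-1$). Note also that maximality of $H'$ is not asserted in the statement, so nothing beyond the above is needed, although a maximum such subgraph can then be obtained from Lemma \ref{ContinuousSpectra} if desired.
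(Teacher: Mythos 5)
Your proposal is correct and follows the intended route: take the $H$ of Lemma \ref{treeDeltaMinus1}, which misses only $v$, and attach a single edge $(u,v)$ to obtain a spanning subgraph with $d_{H'}(v)=1$. The only slight detour is the degree-count plus the forests-are-class-one argument; since $H$ is already $(\Delta(T)-1)$-edge-colorable, you can simply assign the new edge $(u,v)$ the unused $\Delta(T)$-th color, which gives the $\Delta(T)$-edge-coloring of $H'$ immediately.
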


Now, we introduce an operation that will help us to characterize the trees $T$ with $sp(T)=\Delta(T)$. Let $T_1$ be a tree with $|V(T_1)|\geq 3$, and let $K_{1,p}$ be a star with $p\geq 2$. Consider the tree $T=T_1 \circ K_{1,p}$ obtained from $T_1$ and $K_{1,p}$ by identifying a degree one vertex of $K_{1,p}$ with a vertex $v\in B=B(T_1)$. First, we establish some properties of the operation $\circ$.

\begin{lemma}\label{OperProp} Let $T_1$ be a tree with $|V(T_1)|\geq 3$, and let $K_{1,p}$ be a star with $p\geq 2$. If $T=T_1 \circ K_{1,p}$ then:
\begin{enumerate}
	\item [(a)]if $p< sp(T_1)=\Delta(T_1)$, then $sp(T)\neq \Delta(T)$;
	\item [(b)]if $p\leq sp(T_1)<\Delta(T_1)$, then $sp(T)\neq \Delta(T)$;
	\item [(c)]if $sp(T_1)<p$, then $sp(T)\neq \Delta(T)$;
	\item [(d)]if $p=sp(T_1)=\Delta(T_1)$, then $sp(T)=\Delta(T)$.
\end{enumerate}
\end{lemma}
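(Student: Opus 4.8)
The plan is to analyze how $\Delta(T)$ relates to $\Delta(T_1)$ and $p$ in the construction $T=T_1\circ K_{1,p}$, and then in each of the four regimes either exhibit a spanning $(\Delta(T)-1)$-edge-colorable subgraph of $T$ (proving $sp(T)\neq\Delta(T)$) or show that no spanning $(\Delta(T)-1)$-edge-colorable subgraph exists (proving $sp(T)=\Delta(T)$). First I would record the key structural facts: writing $v\in B(T_1)$ for the attachment vertex and $c$ for the center of $K_{1,p}$, we have $d_T(v)=d_{T_1}(v)+1$ and $d_T(c)=p$, while all other degrees are unchanged; since $v\in B(T_1)$ means $d_{T_1}(v)\leq \Delta(T_1)-1$, we get $d_T(v)\leq \Delta(T_1)$, so $\Delta(T)=\max\{\Delta(T_1),p,d_{T_1}(v)+1\}$. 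In cases (a), (b), (d) we have $p\leq sp(T_1)\leq\Delta(T_1)$, hence $\Delta(T)=\Delta(T_1)$; in case (c), $p>sp(T_1)$, and one further splits according to whether $p\leq\Delta(T_1)$ or $p>\Delta(T_1)$.

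For the ``easy'' direction (parts (a), (b), (c)), I would construct a spanning $(\Delta(T)-1)$-edge-colorable subgraph of $T$ by combining a subgraph of $T_1$ with the star edges. In cases (a) and (b) we have $p\leq sp(T_1)$ and $\Delta(T)=\Delta(T_1)$. Apply Lemma~\ref{treeDeltaMinus1} to $T_1$ at the vertex $v\in B(T_1)$: this yields a $(\Delta(T_1)-1)$-edge-colorable subgraph $H_1$ of $T_1$ with either $V(H_1)=V(T_1)$ or $V(T_1)\setminus V(H_1)=\{v\}$. If $v$ is missed by $H_1$, attach $v$ to the star by adding one edge $(v,c)$; then add the remaining $p-1$ leaf-edges of $K_{1,p}$ at $c$, which has degree $p-1\leq sp(T_1)-1\leq \Delta(T_1)-1=\Delta(T)-1$ in the resulting subgraph. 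If instead $v$ is covered by $H_1$, we cannot also use the edge $(v,c)$ without possibly exceeding the degree bound at $v$, so we simply take all $p$ leaf-edges of $K_{1,p}$ (they form a star of degree $p\leq sp(T_1)\leq\Delta(T_1)$ — here one must be slightly careful, using $p\leq sp(T_1)$ in (a),(b) and noting $sp(T_1)\le\Delta(T_1)$, and when $p=\Delta(T_1)$ one instead omits one leaf-edge and attaches $v$'s leaf differently, which is exactly where the strict inequalities $p<sp(T_1)$ in (a) and $sp(T_1)<\Delta(T_1)$ in (b) give the needed slack). For case (c), if $p\leq\Delta(T_1)$ then again $\Delta(T)=\Delta(T_1)$ and one uses Corollary~\ref{spdeltacor2}-type reasoning: since $sp(T_1)<p\leq\Delta(T_1)$, $T_1$ has a spanning $(\Delta(T_1)-1)$-edge-colorable subgraph, to which one adds $p-1$ of the star edges; if $p>\Delta(T_1)$ then $\Delta(T)=p$ and $\Delta(T)-1=p-1\geq\Delta(T_1)$, so a spanning $(\Delta(T_1)-1)$-edge-colorable subgraph of $T_1$ (which exists since $sp(T_1)<\Delta(T_1)$... or, if $sp(T_1)=\Delta(T_1)$, a spanning $\Delta(T_1)$-edge-colorable subgraph, still of order $\le p-1$) together with $p-1$ star edges at $c$ gives the desired subgraph of $T$.

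For the ``hard'' direction, part (d), I must show that when $p=sp(T_1)=\Delta(T_1)=:\Delta$ there is \emph{no} spanning $(\Delta-1)$-edge-colorable subgraph of $T=T_1\circ K_{1,p}$, equivalently $sp(T)\ge\Delta=\Delta(T)$, which with \eqref{trivialupperbound} gives equality. Suppose toward a contradiction that $H$ is a spanning $(\Delta-1)$-edge-colorable subgraph of $T$. Look at the center $c$ of the star: it has $p=\Delta$ neighbors in $T$ (the $p-1$ leaves plus $v$), and since each leaf has degree $1$ in $T$, $H$ must contain every leaf-edge at $c$; that already forces $d_H(c)\ge p-1=\Delta-1$, so $d_H(c)=\Delta-1$ and the edge $(c,v)\notin E(H)$. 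Hence the restriction $H_1:=H\cap T_1$ is a spanning subgraph of $T_1$ (it still covers $v$ via some edge of $T_1$, since $H$ covers $v$ and the only non-$T_1$ edge at $v$ is $(v,c)$), and $H_1$ is $(\Delta-1)$-edge-colorable with $\Delta-1=\Delta(T_1)-1$. But that says $sp_2(T_1)\le\Delta(T_1)-1$, i.e. $sp(T_1)<\Delta(T_1)=sp(T_1)$, a contradiction. The main obstacle I anticipate is exactly this part (d) argument — specifically making airtight the claim that $H$ \emph{must} use all the leaf-edges and therefore cannot spend a color on $(c,v)$, and then correctly tracking that $H_1$ still covers $v$; the cases (a)--(c) are essentially bookkeeping around the degree bound at $c$ and the boundary value $p=\Delta(T_1)$, and the delicate point there is invoking the strict inequalities in the hypotheses of (a) and (b) to get one unit of slack when $H_1$ already covers $v$.
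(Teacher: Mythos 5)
Your proposal is correct and takes essentially the same route as the paper: for (a)--(c) you build a spanning $(\Delta(T)-1)$-edge-colorable subgraph by combining a subgraph of $T_1$ (from Lemma~\ref{treeDeltaMinus1} or a spanning $sp(T_1)$-edge-colorable one) with the star edges, and for (d) you run the identical contradiction showing the leaf-edges force $(v,c)\notin E(H)$ so that $H\cap T_1$ would witness $sp(T_1)<\Delta(T_1)$. The only blemishes are bookkeeping in (a)--(b): after adding $(v,c)$ the center has degree $p$ (not $p-1$), though the hypotheses still give $p\le\Delta(T)-1$, and in the subcase where $v$ is already covered the clean move is simply to omit $(v,c)$ (as the paper does), rather than ``take all $p$ leaf-edges.''
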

\begin{proof}Let $L=\max\{\Delta(T_1),p\}$. Clearly, $\Delta(T)=L$. Suppose that the tree $T$ has been obtained from $T_1$ and $K_{1,p}$, by identifying the vertices $w\in B=B(T_1)$, and the degree one vertex $u\in V(K_{1,p})$. Moreover, let $z$ be the center of $K_{1,p}$.

(a) Since $\Delta(T_1)>p$, then $\Delta(T)=\Delta(T_1)$. Let us show that $sp(T)\leq \Delta(T)-1$. As $w\in B=B(T_1)$, Corollary \ref{spdeltacor} implies that there is a $(\Delta(T_1)-1)$-edge-colorable subgraph $H_1$ of $T_1$, such that $V(T_1)\backslash V(H_1)=\{w\}$. Consider the subgraph $H$ of $T$ obtained from $H_1$ by adding $E(K_{1,p})$ to it. Clearly, $H$ is $(\Delta(T)-1)$-edge-colorable subgraph of $T$, hence $sp(T)\leq \Delta(T)-1 < \Delta(T)$.

(b) Clearly, $\Delta(T)=\Delta(T_1)$. Let us show that $sp(T)\leq sp(T_1)<\Delta(T)$. Take a strongly spanning $sp(T_1)$-edge-colorable subgraph $H_1$ of $T_1$. Consider the subgraph $H$ of $T$ obtained from $H_1$ by adding $E(K_{1,p})\backslash \{(u,z)\}$ to it. Clearly, $H$ is a strongly spanning $sp(T_1)$-edge-colorable subgraph of $T$. Hence $sp(T)\leq sp(T_1)$.

(c) Let us show that $sp(T)\leq p-1<\Delta(T)$. Take a strongly spanning $sp(T_1)$-edge-colorable subgraph $H_1$ of $T_1$. Consider the subgraph $H$ of $T$ obtained from $H_1$ by adding $E(K_{1,p})\backslash \{(u,z)\}$ to it. Clearly, $H$ is a strongly spanning $(p-1)$-edge-colorable subgraph of $T$. Hence $sp(T)\leq p-1$.

(d) Clearly, $\Delta(T)=\Delta(T_1)=p$. Suppose that $k=sp(T)<\Delta(T)=p$, and let $H$ be a strongly spanning $k$-edge-colorable subgraph of $T$. Set: $H_1=H\cap E(T_1)$. 

Observe that $(w,z)\notin E(H)$, as otherwise $E(K_{1,p})\subseteq E(H)$ and hence all edges of $K_{1,p}$ would have to be colored, which would mean that $k=p$. This implies that $H_1$ is a strongly spanning $k$-edge-colorable subgraph of $T_1$, hence $sp(T_1)\leq k<p=\Delta(T_1)$, which contradicts our assumption.
$\square$
\end{proof}

We are ready to characterize the trees $T$ with $sp(T)=\Delta(T)$. For that purpose, for any two trees $T'$ and $T''$, we write $T' \rightarrow T''$, if $T''$ can be obtained from $T'$ by the application of Lemma \ref{OperProp}(d).

\begin{theorem}\label{TreeCharactthm} A tree $T$ satisfies $sp(T)=\Delta(T)$, if and only if, there is a sequence of trees $T_0, T_1,..., T_m$ ($m\geq 0$), such that $T_0$ is a star, $T_m=T$, $sp(T_j)=\Delta(T_j)$ for $j=0,1,...,m$ and $T_0 \rightarrow T_1 \rightarrow \ldots \rightarrow T_m$.
\end{theorem}
\begin{proof} If $T$ is a star, then clearly $sp(T)=\Delta(T)$. On the other hand, if $T$ is obtained from a star $T_0$ by applying Lemma \ref{OperProp}(d), then by Lemma \ref{OperProp}(d), all intermediate trees $T_j$ satisfy $sp(T_j)=\Delta(T_j)$. Hence $sp(T)=\Delta(T)$.

Now, assume that $T$ satisfies $sp(T)=\Delta(T)$. Let us show the existence of the corresponding sequence of trees. If $T$ is a star, we are done. Otherwise, assume that $T$ is not a star. Then, there is a vertex $z$ of $T$, that is of degree $p\geq 2$, such that $z$ is adjacent to exactly $p-1$ vertices of degree one. Let $T'$ be the tree obtained from $T$ by removing the vertex $z$ and all its neighbours that are of degree one. Moreover, let $w$ be the vertex of $T'$ such that $(z,w)\in E(T)$. Let us show that $T=T' \circ K_{1,p}$. 

Clearly, it suffices to show that $w\in B=B(T')$. Suppose that $w\in A=A(T')$, that is $d_{T'}(w)=\Delta(T')$. Then, clearly, $\Delta(T)=\max\{d_T(w),d_T(z)\}=\max\{\Delta(T')+1,p\}$. Consider a strongly spanning subgraph $H$ of $T$ obtained from any strongly spanning $\Delta(T')$-edge-colorable subgraph of $T'$ by adding all edges incident to $z$ except $(z,w)$. It is not hard to see that $H$ is $\max\{\Delta(T'),p-1\}$-edge-colorable, hence $sp(T)\leq \max\{\Delta(T'),p-1\}<\max\{\Delta(T')+1,p\}=\Delta(T)$ contradicting the choice of $T$.

Lemma \ref{OperProp} implies that $T'$ and $p$ satisfy the conditions of Lemma \ref{OperProp}(d). Hence, $T' \rightarrow T$. By induction, there is a sequence of trees $T_0, T_1,..., T_m$ ($m\geq 0$), such that $T_0$ is a star, $T_m=T'$, $sp(T_j)=\Delta(T_j)$ for $j=0,1,...,m$ and $T_0 \rightarrow T_1 \rightarrow \ldots \rightarrow T_m$. Consider the sequence of trees $T_0, T_1,..., T_m, T_{m+1}$, where $T_{m+1}=T$. Observe that it meets the requirements of the theorem. The proof of Theorem \ref{TreeCharactthm} is completed.
$\square$
\end{proof}

Now we turn to the problem of finding some bounds for $sp(G)$ in terms of well-known graph theoretic parameters.

Thomassen has shown that any almost regular multigraph $G$ (that is, a multigraph $G$ with $\Delta(G)-\delta(G)\leq 1$) has a $[1,2]$-factor \cite{Thom}, hence we have:

\begin{proposition}\label{almostsp} For any almost regular multigraph $G$ $sp(G)\leq 2$.
\end{proposition}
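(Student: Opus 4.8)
The plan is to derive the bound directly from Thomassen's theorem together with the equality $sp(G)=sp_1(G)$ established earlier in the paper. Recall that $sp_1(G)$ is by definition the least $k$ for which $G$ admits a $[1,k]$-factor, and that the theorem asserting $sp_1(G)=sp_2(G)=sp_3(G)$ allows us to write $sp(G)=sp_1(G)$ for every graph $G$ without isolated vertices. So it will suffice to exhibit a $[1,2]$-factor of $G$.

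First I would dispose of the degenerate case $\Delta(G)\leq 1$. Since $sp(G)$ is defined only when $G$ has no isolated vertices, in this case $\delta(G)=1$ as well, so $G$ is a disjoint union of edges, i.e.\ a perfect matching; then $G$ is its own $[1,1]$-factor and $sp(G)=1\leq 2$. Hence we may assume $\Delta(G)\geq 2$, and then almost-regularity forces $\delta(G)\geq\Delta(G)-1\geq 1$, so $G$ has no isolated vertices and $sp(G)$ is indeed well-defined.

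Next I would invoke Thomassen's theorem: as $G$ is almost regular, it has a $[1,2]$-factor $H$. By the definition of $sp_1$, the existence of $H$ gives $sp_1(G)\leq 2$, and therefore $sp(G)=sp_1(G)\leq 2$, which is the desired conclusion.

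I do not expect a genuine obstacle here: the substance of the statement is carried entirely by Thomassen's theorem on $[1,2]$-factors, and the only point requiring a sentence of care is the boundary case $\Delta(G)\leq 1$, where one simply notes that a $[1,2]$-factor (in fact a $[1,1]$-factor) still exists because the standing hypothesis that makes $sp(G)$ meaningful already rules out isolated vertices.
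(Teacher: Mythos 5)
Your proposal is correct and matches the paper's argument: the proposition is stated there as an immediate consequence of Thomassen's theorem on $[1,2]$-factors together with the identity $sp(G)=sp_1(G)$. Your extra treatment of the case $\Delta(G)\leq 1$ is harmless but not needed, since Thomassen's theorem already applies to every almost regular graph without isolated vertices.
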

\begin{corollary} Any regular multigraph has a strongly spanning maximum $2$-edge-colorable subgraph.
\end{corollary}
\begin{corollary} Any cubic multigraph has a strongly spanning maximum $2$-edge-colorable subgraph.
\end{corollary}

Let us note that the statement of the last corollary for bridgeless cubic multigraphs first appeared in the proof of Theorem 4.1 from \cite{Steffen}. However, an attentive reader probably has already realized that the proof given in \cite{Steffen} is wrong. 

Retaining the notations of \cite{Steffen}, let us, first explain, what is wrong there. The gap is that when the author removes the edges $e_1$ and $e_2$ from a maximum $2$-edge-colorable subgraph $H$ and adds the edges $(v,u_1)$ and $(v,u_2)$ to it to get a new maximum $2$-edge-colorable subgraph $H^{\prime}$, he may leave the other ($\neq u_1$ and $\neq u_2$, respectively) end-vertices isolated, so after this operation one can not conclude that $V(H^{\prime})=V(H)\cup \{v\}$ as it is done there.

\bigskip

Below we offer a generalization of Proposition \ref{almostsp}. Our proof requires the following result of Lov\'asz:
\begin{theorem}(Lov\'asz \cite{LovVal}) If $G$ is a multigraph with $\Delta(G)\leq s+t-1$, then $G$ can be partitioned into two subgraphs $H$ and $L$, such that $\Delta(H)\leq s$ and $\Delta(L)\leq t$.
\end{theorem}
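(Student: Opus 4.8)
The plan is to read the statement as an edge‑decomposition: we want to split $E(G)$ into $E(H)$ and $E(L)$, with $V(H)=V(L)=V(G)$, so that $d_H(v)\le s$ and $d_L(v)=d_G(v)-d_H(v)\le t$ at every vertex $v$. First I would dispose of the degenerate cases: if $s=0$ then $\Delta(G)\le t-1$ and we may take $L=G$ with $H$ edgeless, and symmetrically if $t=0$; so from now on $s,t\ge 1$. The point of the hypothesis $\Delta(G)\le s+t-1$ is to guarantee \emph{local slack}: no vertex can be tight on both sides at once. Concretely, requiring $d_H(v)\le s$ and $d_L(v)\le t$ at every $v$ is exactly asking for a spanning subgraph $H\subseteq G$ with
\[
g(v):=\max\{0,\;d_G(v)-t\}\ \le\ d_H(v)\ \le\ \min\{s,\;d_G(v)\}=:f(v),
\]
and a short case analysis (according to whether $d_G(v)$ exceeds $s$, resp.\ $t$) shows that $\Delta(G)\le s+t-1$ forces $g(v)<f(v)$ at every vertex of positive degree. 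So it suffices to produce a subgraph realising these degree bounds.

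For the construction I would argue variationally. Among all $2$‑colourings of $E(G)$ by colours $H$ and $L$, choose one minimising the total excess $\Phi=\sum_{v}\bigl(\max\{0,d_H(v)-s\}+\max\{0,d_L(v)-t\}\bigr)$, and suppose towards a contradiction that $\Phi>0$. Then some vertex $x$ has positive excess; using the symmetry that swaps $(H,s)$ with $(L,t)$, assume $d_H(x)\ge s+1$, so that $d_L(x)\le t-2$ by the degree bound — $x$ has genuine slack on the $L$‑side. Now grow a maximal trail $x=w_0,w_1,w_2,\dots$ whose edges are pairwise distinct and alternate in colour, $w_0w_1$ being an $H$‑edge, and at each vertex that the trail enters on one colour and leaves on the other the degree bound again provides slack on the relevant side. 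Recolour every edge of the trail to the opposite colour. One then argues that $\Phi$ drops strictly: at $x$ the $H$‑excess falls by one while the new $L$‑degree is still at most $t-1$; a vertex met only in the interior of the trail lies on one trail‑edge of each colour and so keeps both degrees; and at the far endpoint — chosen where the alternation can no longer be continued — the single added edge should land on a side where there was slack (or the trail has closed back at $x$). This would contradict the minimality of $\Phi$, so $\Phi=0$ and the desired colouring exists.

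The genuine obstacle is exactly that last clause: making rigorous that a \emph{maximal} alternating trail terminates at a vertex (or at $x$) where the flip creates no new excess. The delicate point is the case where the trail revisits vertices, and in particular where at the final vertex every edge of the colour one would need to continue is already used by the trail — then flipping can raise that vertex's degree past its bound. Resolving this is a Kempe‑chain‑style surgery: one performs a parity count of the colours entering and leaving each repeated vertex over all its visits and reroutes the trail accordingly. A cleaner way to package the whole argument is to invoke the classical $(g,f)$‑factor theorem of Lov\'asz directly: since $0\le g(v)<f(v)\le d_G(v)$ for every $v$, its Tutte‑type feasibility condition holds automatically and yields the subgraph $H$ above — though that theorem is itself proved by essentially the same alternating‑path machinery, so for a self‑contained account the variational argument, with the trail‑surgery carried out carefully, is the honest route.
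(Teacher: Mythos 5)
A preliminary remark on the comparison you were asked for: the paper does not prove this statement at all --- it is quoted from Lov\'asz's paper on subgraphs with prescribed valencies and used as a black box --- so your argument has to stand on its own. Its skeleton is the classical one (reformulate the decomposition as a spanning subgraph $H$ with $\max\{0,d_G(v)-t\}\le d_H(v)\le\min\{s,d_G(v)\}$, then a variational alternating-trail argument), and the reduction and the strictness check $g(v)<f(v)$ are fine. But the proof is not complete, and you say so yourself: the strict decrease of the potential $\Phi$ after flipping a maximal alternating trail is never established. The problematic cases are exactly the ones you flag and then defer: if the trail closes back at $x$ after an even number of edges, the flip changes no vertex degree and $\Phi$ does not drop; and at a vertex the trail visits more than once, the slack you appeal to may already have been consumed by earlier visits, so a maximal trail can terminate at a vertex where the flip pushes the other colour past its bound. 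Announcing that a ``Kempe-chain-style surgery'' with a ``parity count'' will handle this is not a proof --- that surgery is the entire difficulty, and without it the minimality of $\Phi$ yields no contradiction.

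Your fallback is also unsound as stated. It is not true that the feasibility condition of the $(g,f)$-factor theorem ``holds automatically'' whenever $0\le g(v)<f(v)\le d_G(v)$ for all $v$: take $G=K_{1,p}$ with $g\equiv 1$ and $f\equiv 2$; every leaf forces its edge into the factor, so the centre gets degree $p>2$ and no $(g,f)$-factor exists although $g<f$ everywhere. Indeed, if that principle were valid, every graph without isolated vertices would have a $[1,2]$-factor, i.e.\ $sp(G)\le 2$ always, contradicting $sp(K_{1,p})=p$ --- and it would also make the Yu--Liu $[a,b]$-factor criterion and Thomassen's $[1,2]$-factor theorem, both quoted in this paper, pointless. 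So if you want to route the proof through factor theory, you must actually verify the feasibility condition for $g(v)=\max\{0,d_G(v)-t\}$, $f(v)=\min\{s,d_G(v)\}$ using the hypothesis $\Delta(G)\le s+t-1$; one clean way is the standard corollary that a $(g,f)$-factor exists whenever $g(v)\le\theta\,d_G(v)\le f(v)$ for all $v$ with a single ratio $\theta\in[0,1]$, which holds here for any $\theta\in\left[\frac{s-1}{s+t-1},\frac{s}{s+t-1}\right]$. That verification (or the completed trail surgery) is precisely the content your write-up omits.
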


\begin{theorem}\label{Deltadelta} For any multigraph $G$ without isolated vertices $sp(G)\leq \Delta(G)-\delta(G)+2$.
\end{theorem}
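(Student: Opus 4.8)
The goal is to produce a $[1,k]$-factor of $G$ with $k=\Delta(G)-\delta(G)+2$, since by the equality $sp_1(G)=sp(G)$ this suffices. Let $\delta=\delta(G)$ and $\Delta=\Delta(G)$. The plan is to apply the theorem of Lov\'asz to split $G$ into two pieces of controlled maximum degree, and then to argue that at least one of the pieces covers every vertex of $G$ when we additionally keep track of minimum degrees. Concretely, I would apply Lov\'asz's theorem with a choice of $s,t$ satisfying $s+t-1\geq\Delta$ to partition $E(G)$ (equivalently, partition $G$ into spanning subgraphs) into $H$ and $L$ with $\Delta(H)\leq s$ and $\Delta(L)\leq t$. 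The natural target is to take $s$ roughly $\Delta-\delta+2$ and $t$ roughly $\delta-1$, so that $s+t-1\geq\Delta$ holds, $\Delta(H)\leq\Delta-\delta+2=k$, and $L$ has maximum degree at most $\delta-1$.

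First I would handle the trivial case $\delta(G)=1$: then $k=\Delta(G)+1$ and the bound $sp(G)\leq\Delta(G)<k$ follows from \eqref{trivialupperbound}, so assume $\delta\geq 2$. Now the key observation: with $\Delta(L)\leq\delta-1$, no vertex $v$ can have \emph{all} of its edges in $L$, because $d_G(v)\geq\delta>\delta-1\geq d_L(v)$. Hence every vertex has $d_H(v)\geq 1$, so $H$ is a spanning subgraph of $G$ with $\Delta(H)\leq k$; in particular $H$ is a $[1,k]$-factor, giving $sp_1(G)\leq k$ and therefore $sp(G)\leq\Delta(G)-\delta(G)+2$. The routine check is that the pair $(s,t)=(\Delta-\delta+2,\delta-1)$ satisfies the hypothesis $\Delta(G)\leq s+t-1$ of Lov\'asz's theorem: indeed $s+t-1=\Delta-\delta+2+\delta-1-1=\Delta$, with equality, so the theorem applies; and $\delta-1\geq 1$ since we assumed $\delta\geq 2$, so the parameter $t$ is a legitimate positive value (if $t=0$ one should instead note directly that $\delta-1=0$ means $\delta=1$, already handled).

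The main obstacle — really the only subtle point — is making sure the degree bookkeeping is tight: we need $\Delta(L)\leq\delta-1$ strictly below $\delta$, which is exactly why the summand $+2$ rather than $+1$ appears, since Lov\'asz's theorem costs us an $s+t-1$ rather than $s+t$. One should double-check that Lov\'asz's theorem indeed permits $t=\delta-1$ as small as $1$ and that "partitioned into two subgraphs" there means an edge-partition into spanning subgraphs (so that $d_H(v)+d_L(v)=d_G(v)$ for every $v$), which is the standard reading and is what the degree counting relies on. With that in hand the argument is complete; comparison with Proposition \ref{almostsp} is immediate, since $\Delta-\delta\leq 1$ gives $k\leq 3$, and a small additional case analysis for almost regular graphs recovers the sharper bound $2$, though that is not needed for the theorem as stated.
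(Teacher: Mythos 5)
Your proposal is correct and follows essentially the same route as the paper: apply Lov\'asz's theorem with $s=\Delta(G)-\delta(G)+2$ and $t=\delta(G)-1$, note $\Delta(G)=s+t-1$, and conclude that since $\Delta(L)\leq\delta(G)-1$ every vertex retains an edge in $H$, so $H$ is a $[1,\Delta(G)-\delta(G)+2]$-factor. Your separate treatment of the case $\delta(G)=1$ (where $t=0$) is a small extra care the paper omits, but otherwise the arguments coincide.
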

\begin{proof}
For a multigraph $G$ take $s=\Delta(G)-\delta(G)+2$ and $t=\delta(G)-1$. Observe that $\Delta(G)= s+t-1$. Apply Lov\'asz's theorem. As a result we have two subgraphs $H$ and $L$, such that $\Delta(H)\leq s$ and $\Delta(L)\leq t$.

Since $\Delta(L)\leq t=\delta(G)-1$, we have $\delta(H)\geq 1$. On the other hand, $\Delta(H)\leq s=\Delta(G)-\delta(G)+2$. Thus $H$ is a $(1,\Delta(G)-\delta(G)+2)$-factor, which proves the theorem. $\square$
\end{proof}Let us note that this bound is tight, since any regular multigraph without a perfect matching achieves it. It can be shown that this bound can be improved by one if $G$ is non-regular (that is, $\Delta(G)\neq \delta(G)$). However, we will not prove this, because below we will prove a significantly better bound for $sp(G)$.

Our next bound is formulated in terms of $\nu(G)$. Its proof requires Theorem 2.1.9 from \cite{Yu}:

\begin{theorem}\label{YuLiuTheorem} \cite{Yu}: Let $b>a\geq 1$. Then a multigraph $G$ has an $[a,b]$-factor, if and only if for all $S\subseteq V(G)$ $\sum_{i=0}^{a-1}(a-i)p_i(G-S)\leq b|S|$, where $p_i(G-S)$ is the number of vertices of degree $i$ in the multigraph $G-S$.
\end{theorem}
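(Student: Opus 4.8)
The plan is to obtain Theorem~\ref{YuLiuTheorem} as a specialization of Lov\'asz's general $(g,f)$-factor theorem to the constant functions $g\equiv a$ and $f\equiv b$; the whole point is that the hypothesis $b>a$ kills the ``parity'' term in the general criterion, and that an optimization over an auxiliary set collapses to the weighted count of low-degree vertices appearing in the statement.

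I would first dispose of the easy (``only if'') direction by a direct edge count. Suppose $F$ is an $[a,b]$-factor of $G$ and fix $S\subseteq V(G)$. Every vertex $v\notin S$ with $d_{G-S}(v)=i<a$ keeps at most $i$ of its $F$-edges inside $V(G)\setminus S$, hence sends at least $a-i$ edges of $F$ into $S$; since distinct such vertices account for distinct edges, the number of $F$-edges having exactly one end in $S$ is at least $\sum_{i=0}^{a-1}(a-i)p_i(G-S)$. That number is also at most $\sum_{v\in S}d_F(v)\le b|S|$, which is the asserted inequality.

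For the ``if'' direction I would invoke Lov\'asz's $(g,f)$-factor theorem: a graph $G$ has a $(g,f)$-factor (with $g\le f$) if and only if for every pair of disjoint sets $S,T\subseteq V(G)$
\[
\sum_{v\in S}f(v)\;-\;\sum_{v\in T}g(v)\;+\;\sum_{v\in T}d_{G-S}(v)\;-\;h(S,T)\;\ge\;0,
\]
where $h(S,T)$ is the number of components $C$ of $G-(S\cup T)$ with $g\equiv f$ on $C$ and $f(C)+e_G(T,C)$ odd. Taking $g\equiv a$, $f\equiv b$, the inequality $b>a$ forces $g(v)\ne f(v)$ at every vertex, so \emph{no} component is ever counted and $h(S,T)\equiv 0$. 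The criterion then reads $b|S|+\sum_{v\in T}\bigl(d_{G-S}(v)-a\bigr)\ge 0$ for all disjoint $S,T$. Fixing $S$ and minimising the left-hand side over $T$, the optimal $T$ is precisely $\{v\in V(G)\setminus S : d_{G-S}(v)<a\}$; with this $T$ the sum equals $-\sum_{i=0}^{a-1}(a-i)p_i(G-S)$, so the Lov\'asz condition becomes exactly $\sum_{i=0}^{a-1}(a-i)p_i(G-S)\le b|S|$ for all $S$. Hence the hypothesis of the theorem is equivalent to the $(g,f)$-factor criterion for these $g,f$, and an $[a,b]$-factor exists.

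The real content—and the main obstacle if one wants a self-contained argument rather than citing Lov\'asz—is the $(g,f)$-factor theorem itself. That is proved by Tutte's gadget reduction: replace each vertex $v$ of degree $d$ by an auxiliary graph on its $d$ incident-edge vertices together with suitable ``surplus'' and ``deficiency'' vertices, arranged so that the perfect matchings of the resulting graph $G^{*}$ correspond bijectively to the $[a,b]$-factors of $G$; then apply Tutte's $1$-factor theorem to $G^{*}$ and translate the resulting barrier back to $G$. The delicate point in that translation is identifying the extremal Tutte set of $G^{*}$ with a set $S\subseteq V(G)$ together with the deficient-vertex set $T$ above—which is exactly where the weighting $\sum_{i=0}^{a-1}(a-i)p_i(G-S)$ and the simplification $h(S,T)\equiv 0$ come from. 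Since the paper only uses the result as a black box, in practice I would simply cite \cite{Yu} (equivalently, Lov\'asz's theorem) at this point.
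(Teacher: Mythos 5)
Your proposal is correct, but there is nothing in the paper to compare it against: the statement is imported as a black box (Theorem 2.1.9 of \cite{Yu}) and the paper gives no proof of it. Your reconstruction is essentially the standard derivation one finds in the factor-theory literature. The necessity direction is fine: each vertex $v\notin S$ with $d_{G-S}(v)=i<a$ sends at least $a-i$ edges of the factor $F$ into $S$, these edges are distinct for distinct $v$, and their total is bounded by $\sum_{v\in S}d_F(v)\le b|S|$. The sufficiency direction via Lov\'asz's $(g,f)$-factor theorem is also sound: with $g\equiv a$, $f\equiv b$ and $b>a$ no component can have $g\equiv f$ on it, so the parity term $h(S,T)$ vanishes, and for fixed $S$ the minimum of $b|S|+\sum_{v\in T}\bigl(d_{G-S}(v)-a\bigr)$ over $T\subseteq V(G)\setminus S$ is attained at $T=\{v:d_{G-S}(v)<a\}$, where it equals $b|S|-\sum_{i=0}^{a-1}(a-i)p_i(G-S)$; hence the deficiency condition over all pairs $(S,T)$ collapses to exactly the inequality in the statement. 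The one honest caveat, which you flag yourself, is that all the real work sits inside the $(g,f)$-factor theorem (or the Tutte gadget reduction behind it), so your argument is a reduction rather than a self-contained proof --- which puts you in the same position as the paper, namely citing \cite{Yu} (or Lov\'asz) for the core result.
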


\begin{theorem}\label{nuBound} For any multigraph $G$ without isolated vertices  $sp(G)\leq |V(G)|-2\cdot \nu(G)+1$.
\end{theorem}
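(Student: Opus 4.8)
The plan is to use Theorem~\ref{YuLiuTheorem} with $a=1$ and $b=|V(G)|-2\nu(G)+1$, which reduces the claim to a single verifiable inequality. With $a=1$ the sum $\sum_{i=0}^{a-1}(a-i)p_i(G-S)$ collapses to just $p_0(G-S)$, the number of isolated vertices of $G-S$, so the condition we must establish is
\begin{equation*}
p_0(G-S)\leq \bigl(|V(G)|-2\nu(G)+1\bigr)\cdot|S|
\end{equation*}
for every $S\subseteq V(G)$. One should first dispose of the boundary case: if $G$ has a perfect matching then $sp(G)=1$ by Tutte's theorem and there is nothing to prove (note $|V(G)|-2\nu(G)+1=1$ in that case and the inequality above still needs checking, but it is immediate since $p_0(G-S)\le|V(G)|$ is too weak — see below). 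We may also assume $b>a$, i.e.\ $b\ge 2$, which holds precisely when $G$ has no perfect matching; when $b=1$ the theorem does not apply directly, so that degenerate case should be handled by Tutte separately.

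The key step is to bound $p_0(G-S)$ from above. First, if $S=\emptyset$ then $G-S=G$ has no isolated vertices, so $p_0(G-\emptyset)=0$ and the inequality is trivial. Hence assume $|S|\geq 1$. The natural estimate is $p_0(G-S)\leq o(G-S)$, since every isolated vertex is itself an odd component; then invoke the Tutte--Berge formula, which gives $o(G-S)\leq |S|+\bigl(|V(G)|-2\nu(G)\bigr)$ for every $S$. Therefore
\begin{equation*}
p_0(G-S)\leq o(G-S)\leq |S|+|V(G)|-2\nu(G)\leq |S|+\bigl(|V(G)|-2\nu(G)\bigr)|S|=\bigl(|V(G)|-2\nu(G)+1\bigr)|S|,
\end{equation*}
where the last inequality uses $|S|\geq 1$ together with $|V(G)|-2\nu(G)\geq 0$. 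This is exactly the condition required by Theorem~\ref{YuLiuTheorem}, so $G$ has a $[1,\,|V(G)|-2\nu(G)+1]$-factor, and consequently $sp(G)=sp_1(G)\leq |V(G)|-2\nu(G)+1$.

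The main obstacle is not any deep argument but rather making sure the hypotheses of Theorem~\ref{YuLiuTheorem} are genuinely met: the theorem requires $b>a\geq 1$, which forces us to treat the perfect-matching case ($b=1$) by hand via Tutte's theorem, and it requires care that the chain of inequalities above really needs $|S|\ge1$ (the step $|S|+c\le(1+c)|S|$ fails for $S=\emptyset$, which is why the empty set is treated separately using the absence of isolated vertices in $G$). Once these two edge cases are isolated, the remaining inequality is a one-line consequence of Tutte--Berge combined with the trivial observation $p_0\le o$, so the proof is short.
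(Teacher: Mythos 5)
Your proof is correct and follows essentially the same route as the paper: apply Theorem~\ref{YuLiuTheorem} with $a=1$, $b=|V(G)|-2\nu(G)+1$, bound $p_0(G-S)\leq o(G-S)$, and finish with the Tutte--Berge formula and the estimate $|S|+c\leq(1+c)|S|$ for $|S|\geq 1$. In fact you are more careful than the paper, which silently skips the $S=\emptyset$ case and the degenerate case $b=1$ (perfect matching) where the hypothesis $b>a$ of Theorem~\ref{YuLiuTheorem} fails; your separate treatment of both is a genuine tightening of the written argument.
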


\begin{proof} By Theorem \ref{YuLiuTheorem} it suffices to show that for each $S\subseteq V(G)$ $p_0(G-S)\leq (|V(G)|-2\cdot \nu(G)+1)|S|$, where $p_0(G-S)$ is the number of isolated vertices of $G-S$. Observe that by Tutte-Berge formula, we have: 
\begin{align*}
p_0(G-S)\leq o(G-S)\leq |S|+(|V(G)|-2\cdot \nu(G))\leq |S|+|S|(|V(G)|-2\cdot \nu(G))=\\
(|V(G)|-2\cdot \nu(G)+1)|S|.
\end{align*}
$\square$
\end{proof}Note that any multigraph with a perfect or a near-perfect matching (a matching missing exactly one vertex) achieves this bound.

Now, we prove the following improvement of Theorem \ref{Deltadelta}:

\begin{theorem}\label{Deltadeltaratio} For any multigraph $G$ without isolated vertices $sp(G)\leq 1+\left\lfloor \frac{\Delta(G)}{\delta(G)}\right\rfloor$. Moreover, if $G$ is non-regular, then $sp(G)\leq \left\lceil  \frac{\Delta(G)}{\delta(G)}\right\rceil$.
\end{theorem}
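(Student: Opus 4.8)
The plan is to prove both inequalities at once by exhibiting, for the appropriate value of $k$, a $[1,k]$-factor of $G$; since $sp(G)=sp_1(G)=\min\{k:G\text{ has a }[1,k]\text{-factor}\}$, producing a $[1,k]$-factor shows $sp(G)\le k$. For the first bound take $k=1+\left\lfloor\frac{\Delta(G)}{\delta(G)}\right\rfloor$, and for the second take $k=\left\lceil\frac{\Delta(G)}{\delta(G)}\right\rceil$. The tool will be Theorem \ref{YuLiuTheorem} applied with $a=1$ and $b=k$. The first thing to check is its side hypothesis $b>a=1$: because $G$ has no isolated vertices we have $\delta(G)\ge 1$ and $\Delta(G)\ge\delta(G)$, so $\left\lfloor\frac{\Delta(G)}{\delta(G)}\right\rfloor\ge 1$ and $k\ge 2$ in the first case; in the second case the assumption that $G$ is non-regular means $\Delta(G)>\delta(G)$, hence $\left\lceil\frac{\Delta(G)}{\delta(G)}\right\rceil\ge 2$. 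This is exactly where non-regularity is needed, and it must be: for a regular $G$ the value $\left\lceil\frac{\Delta(G)}{\delta(G)}\right\rceil=1$ would assert the existence of a perfect matching, which of course may fail.

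With $a=1$ the criterion of Theorem \ref{YuLiuTheorem} reads simply $p_0(G-S)\le k\,|S|$ for every $S\subseteq V(G)$, where $p_0(G-S)$ denotes the number of isolated vertices of $G-S$. For $S=\emptyset$ this is $0\le 0$ since $G$ has no isolated vertices, so fix a non-empty $S$ and let $I$ be the set of isolated vertices of $G-S$. By definition every vertex $v\in I$ has degree $0$ in $G-S$, so all $d_G(v)\ge\delta(G)$ edges of $G$ incident to $v$ (counted with multiplicity) go to $S$. Counting the edges of $G$ joining $I$ and $S$ in two ways: from the side of $I$ this number is at least $\sum_{v\in I}d_G(v)\ge\delta(G)\,|I|$, while from the side of $S$ it is at most $\sum_{s\in S}d_G(s)\le\Delta(G)\,|S|$. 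Hence $\delta(G)\,|I|\le\Delta(G)\,|S|$, i.e. $|I|\le\frac{\Delta(G)}{\delta(G)}\,|S|$.

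To finish, note $\frac{\Delta(G)}{\delta(G)}\le 1+\left\lfloor\frac{\Delta(G)}{\delta(G)}\right\rfloor$ and $\frac{\Delta(G)}{\delta(G)}\le\left\lceil\frac{\Delta(G)}{\delta(G)}\right\rceil$, so in either case $p_0(G-S)=|I|\le k\,|S|$; thus the hypothesis of Theorem \ref{YuLiuTheorem} holds and $G$ has a $[1,k]$-factor, giving $sp(G)\le k$. I do not expect a real obstacle here: the only points requiring care are the side condition $b>a$ of Theorem \ref{YuLiuTheorem} (which is precisely what separates the regular and non-regular cases), the trivial treatment of $S=\emptyset$, and counting edges with multiplicity since $G$ may have multiple edges.
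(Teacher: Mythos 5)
Your proposal is correct and follows essentially the same route as the paper: apply Theorem \ref{YuLiuTheorem} with $a=1$ and the claimed bound as $b$, and verify the condition $p_0(G-S)\leq b|S|$ via the counting inequality $\delta(G)\,p_0(G-S)\leq \Delta(G)\,|S|$. Your extra care about the side condition $b>a$ (where non-regularity enters), the case $S=\emptyset$, and edge multiplicities only makes explicit what the paper leaves implicit.
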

\begin{proof} Note that since $1+\left\lfloor  \frac{\Delta(G)}{\delta(G)} \right\rfloor> 1$, by Theorem \ref{YuLiuTheorem} it suffices to show that for each $S\subseteq V(G)$ $p_0(G-S)\leq (1+\left\lfloor  \frac{\Delta(G)}{\delta(G)} \right\rfloor)|S|$, where $p_0(G-S)$ is the number of isolated vertices of $G-S$.

Observe that the $p_0(G-S)$ isolated vertices are connected to vertices of $S$, thus
\begin{equation*}
\delta(G)\cdot p_0(G-S)\leq \Delta(G)\cdot |S|,
\end{equation*} which proves the required bound.

For the proof of the second statement, observe that since $G$ is non-regular, then $\left\lceil  \frac{\Delta(G)}{\delta(G)} \right\rceil> 1$, thus Theorem \ref{YuLiuTheorem} is applicable. The rest is the same as above.
$\square$
\end{proof} 

Let us note that there are examples of multigraphs such that the difference between the upper bound offered by Theorem \ref{Deltadeltaratio} and $sp(G)$ is arbitrarily big. To see this, let $H$ be an  $r$-regular multigraph containing a perfect matching $F$. Consider a multigraph $G$ obtained from $H$ by replacing one edge of $F$ by a path of length three. Observe that $G$ contains a perfect matching, hence $sp(G)=1$, however the bound offered by Theorem \ref{Deltadeltaratio} is $\left\lceil  \frac{r}{2}\right\rceil$.

In Theorem \ref{nuBound}, we have shown that an upper bound for $sp(G)$ is provable in terms of the difference between $|V(G)|$ and $\nu(G)$. It is natural to wonder, whether such a bound is possible to prove in terms of the ratio of $|V(G)|$ and $\nu(G)$. The following proposition shows the impossibility of such a bound.

\begin{proposition}For any positive integers $a,b$ there is a tree $G$ with $sp(G)>a(\frac{|V(G)|}{\nu(G)})^b$.
\end{proposition}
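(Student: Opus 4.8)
The plan is to exhibit one explicit tree for each pair $(a,b)$ and then simply compute. The intuition guiding the construction is a tension already visible in Theorem~\ref{nuBound}: to make $sp(G)$ large one wants a vertex whose deletion isolates many vertices, but a large value of $sp(G)$ simultaneously forces any maximum matching to miss many vertices, which tends to push $|V(G)|/\nu(G)$ up. The extreme candidates fail for exactly this reason; for instance the star $K_{1,m}$ has $sp(K_{1,m})=m$ but $\nu(K_{1,m})=1$, so $sp/(|V|/\nu)^{b}=m/(m+1)^{b}$, which tends to $0$. The way around this is to keep the isolating gadget small while padding the tree with a long path, which inflates $\nu(G)$ linearly and, crucially, does not help cover the gadget, so it does not decrease $sp(G)$.

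Given positive integers $a$ and $b$, set $m=a\cdot 3^{b}+1$ and let $G$ be the tree obtained from a path $v_{1}v_{2}\cdots v_{2m}$ by adding $m$ new vertices $u_{1},\dots,u_{m}$, each joined only to $v_{1}$. Then $G$ is a tree with no isolated vertices and $|V(G)|=3m$. The argument rests on two elementary lower bounds. First, $\nu(G)\ge m$, since $\{v_{1}v_{2},v_{3}v_{4},\dots,v_{2m-1}v_{2m}\}$ is a matching of size $m$. Second, $sp(G)\ge m$: recalling that $sp(G)=sp_{1}(G)$, let $H$ be a $[1,k]$-factor of $G$ with $k=sp(G)$; since $v_{1}u_{j}$ is the only edge of $G$ incident to $u_{j}$ and $d_{H}(u_{j})\ge 1$, each of the $m$ edges $v_{1}u_{1},\dots,v_{1}u_{m}$ lies in $H$, whence $m\le d_{H}(v_{1})\le k$.

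Combining these two bounds, $|V(G)|/\nu(G)\le 3m/m=3$, so $a\,(|V(G)|/\nu(G))^{b}\le a\cdot 3^{b}=m-1<m\le sp(G)$, which is the required strict inequality. I do not anticipate a genuine obstacle in carrying this out; the only real work is finding a construction for which both lower bounds hold at once, and the one delicate point is that the $u_{j}$ must be genuine leaves all attached to a single vertex. If instead one hung pendant edges or pendant paths on several vertices, a maximum matching could absorb them, deleting any single vertex would no longer isolate many vertices, and the bound $sp(G)\ge m$ would collapse — this is precisely the phenomenon behind the spider-like trees, for which $sp\le 2$. If one wished, replacing $2m$ by any larger even length drives $|V(G)|/\nu(G)$ arbitrarily close to $2$, but this refinement is not needed.
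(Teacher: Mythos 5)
Your proof is correct and follows essentially the same route as the paper: a long path with a bundle of $m$ pendant edges at one end, where the pendant edges force $sp(G)\geq m$ while the path keeps $\nu(G)\geq |V(G)|/3$, so choosing $m>a\cdot 3^{b}$ gives the strict inequality. The paper's construction (a star $K_{1,k}$ joined to a path endpoint, with $k=an^{b}$, $n\geq 4$) differs only in bookkeeping, not in substance.
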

\begin{proof} Let $n$ be any positive integer with $n\geq 4$. Set: $k=an^b$ and $x=2k$. Consider the tree $G$ obtained from a path of length $2x$ and the star $K_{1,k}$ by joining the center of the star to one of end-vertices of the path. Observe that: $|V(G)|=3an^b+1$, $\nu(G)=an^b+1$ and $sp(G)=an^b$. Clearly, we have that $sp(G)>a(\frac{|V(G)|}{\nu(G)})^b$.
$\square$
\end{proof}

\end{document}